
\documentclass[journal]{IEEEtran}

\ifCLASSINFOpdf
\else
\fi

\usepackage{url}
\usepackage{amsmath,amssymb}
\usepackage{float}
\usepackage{graphicx,subfigure}
\usepackage{color}
\usepackage[colorlinks=true,linkcolor=black]{hyperref}
\usepackage{breakurl}
\usepackage{cite}
\usepackage[ruled,linesnumbered]{algorithm2e}
\usepackage{ amssymb }
\usepackage{mathrsfs }
\usepackage{enumitem} 
\usepackage{romannum}
\usepackage{hyperref}
\usepackage{amsthm}
\hypersetup{colorlinks,linkcolor={black},citecolor={black},urlcolor={red}}  

\setlength{\abovedisplayskip}{3pt}
\setlength{\belowdisplayskip}{3pt}

\newcommand{\Revised}{\textcolor{black}}
\newcommand{\ReviseTase}{\textcolor{black}}

\newtheorem{theorem}{Theorem}
\newtheorem{lemma}{Lemma}
\newtheorem{remark}{Remark}
\newtheorem{definition}{Definition}

\newtheorem{corollary}{Corollary}
\newtheorem{problem}{Problem}

\begin{document}

\title{\LARGE\bf Strategies to Inject Spoofed Measurement Data\\ to Mislead Kalman Filter}%
\author{Zhongshun~Zhang$^{1}$, 
        Lifeng~Zhou$^{2}$,
        and~Pratap~Tokekar$^{1}$
\thanks{$^{1}$Zhongshun Zhang and Pratap Tokekar are with the Department of Computer Science, University of Maryland, College Park, USA {\tt\small \{zszhang, tokekar\}}@umd.edu}%
\thanks{$^{2}$Lifeng Zhou is  with the Department of Electrical \& Computer Engineering, Virginia Tech, USA. {\tt\small lfzhou}@vt.edu}%
\thanks{This material is based upon work supported by the National Science Foundation under Grant \#1566247.}
\thanks{A preliminary version of this paper, without the evaluation with $\chi^2$ detector in Section~\ref{sec:detector}, was presented at ACC 2018~\cite{zhang2018strategies}.}}

\maketitle
\thispagestyle{empty}
\pagestyle{empty}

\begin{abstract}

We study the problem of designing false measurement data that is injected to corrupt and mislead the output of a Kalman filter. Unlike existing works that focus on detection and filtering algorithms for the observer, we study the problem from the attacker's point-of-view. In our model, the attacker can corrupt the measurements by injecting additive spoofing signals. The attacker seeks to create a separation between the estimate of the Kalman filter with and without spoofed signals. We present a number of results on how to inject spoofing signals while minimizing the magnitude of the injected signals. The resulting strategies are evaluated through simulations along with theoretical proofs. We also evaluate the spoofing strategy in the presence of a $\chi^2$ spoof detector. Building on our main result, we present a strategy that is proven to successfully mislead a Kalman filter while ensuring it is not detected.
\end{abstract}

\ReviseTase{
\emph{Note to Practitioners}---This paper is motivated by the need for understanding the limits of how much an adversary can corrupt and mislead the output of a state estimator. We study the problem from the adversary's point-of-view. We show how an adversary can mislead the output of a Kalman filter by injecting a minimal amount of energy without being detected by a residual-based attack/failure detector. The proposed approach can be applied as an attack strategy to steer the output of a Kalman filter by a desired amount. This is applicable in a variety of scenarios such as target tracking and robot localization. We point out some limitations of the residual-based attack/failure detectors. We expect that this work will lead to a better and more robust state estimator designs in the future.}

\section{Introduction}
As autonomous systems proliferate, there are growing concerns about their security and safety~\cite{parkinson2017cyber,thing2016autonomous}. Of particular concern is their vulnerability to signal spoofing attacks~\cite{tippenhauer2011requirements}. As a result, many researchers are designing algorithms that enable an \emph{observer} to detect and mitigate signal spoofing attacks (e.g.,~\cite{al2015design,chen2007detecting,gil2017guaranteeing,zhang2017functional,fan2017synchrophasor}). We study the problem from the opposite (i.e., the attacker's) point-of-view. Our goal is to characterize the capabilities of the attacker that is generating the spoofing signals while assuming that the observer is using a Kalman filter for state estimation. 

The problem of generating spoofing attacks has been studied specifically for GPS signals. Tippenhauer et al.~\cite{tippenhauer2011requirements} describe the requirements as well as present a methodology for generating spoofed GPS signals. Larcom and Liu~\cite{larcom2013modeling} presented a taxonomy of GPS spoofing attacks.

The typical approach to mitigate sensor spoofing attacks is by designing robust state estimators~\cite{bezzo2014attack}. Fawzi et al.~\cite{fawzi2011secure} presented the design of a state estimator for a linear dynamical system when some of the sensor measurements are corrupted by an adversarial attacker. We focus on the scenario where the observer uses a Kalman Filter (KF) for estimating the state using measurements that are corrupted by additive spoofing signals by the attackers. We study the problem of generating spoofing signals of minimum energy that can achieve any desired separation between the KF estimate with spoofing and without spoofing. We show that for many practical cases, the spoofing signals can be generated using linear programming in polynomial time. 

\ReviseTase{There has been recent work on designing spoofing attacks for specific systems such as a Linear Quadratic Gaussian controller~\cite{mo2010control}, GPS data~\cite{su2016stealthy}, wireless sensor networks~\cite{mo2010false}, and electric power grids~\cite{liu2011false}.}
In~\cite{liu2011false}, the authors present false data injection attacks, against state estimation in electric
power grids. This paper shows that an attacker can exploit the configuration of a power system to launch such attacks to successfully introduce arbitrary errors into certain state variables while bypassing existing techniques for bad measurement detection.

Another work by Su et al.~\cite{su2016stealthy} is closely related to ours. The authors show how to spoof the GPS signal without triggering a detector that uses the residual in the Kalman filter. They present a 1-step (greedy) online spoofing strategy that solves a linear relaxation of a Quadratically Constrained Quadratic Program (QCQP) at each timestep. We present a strategy that plans for $T$ future timesteps, instead of just the next timestep, while minimizing the spoofing signal energy. Furthermore, we characterize the scenarios under which our strategy finds the optimal solution in polynomial time.

The work that is most closely related to ours is by Mo et al.~\cite{mo2010control,mo2010false}. Their goal is to design false measurement data to mislead a system with Kalman filter~\cite{mo2010false} or an LQG control system~\cite{mo2010control}. Both, our work and the aforementioned work, assume that the system is linear with Gaussian noise and that a discrete Kalman filter is used to estimate the state. \Revised{Mo et al.~\cite{mo2010control} define $(\epsilon,\alpha)$--attacks (see Definitions 2 and 3 in the original paper). Based on their definition, an attack sequence is successful if: (1) the difference in the estimated state of the system under attack and the true state is greater than a given value; and (2) the probability of alarm for the $\chi^2$ failure detector is always smaller than a given threshold. 
They proved that a linear control system is perfectly attackable if and only if its transition matrix has an unstable eigenvalue  and the corresponding eigenvector satisfies additional conditions (c.f. Theorem 2 in \cite{mo2010control}). These conditions may be too strict. 
We relax the requirements of an attackable system with the goal of being applicable to more classes of systems. Specifically, we remove the second condition in the $(\epsilon,\alpha)$--attacks and instead we only consider minimizing the total injected signal by $\sum^{T}_{t=1} \gamma_t \cdot \lVert \epsilon_t\lVert_p^p$. Nevertheless, we also show the conditions under which an attack is successful against an $\chi^2$ detector.} 

\Revised{Bai et al. presented a different notion of a successful attack in two relevant papers~\cite{bai2017kalman,bai2017data}. They define a successful attack as $\epsilon$--stealthy.\footnote{This $\epsilon$ is not related to the $\epsilon$ used by Mo et al.~\cite{mo2010false,mo2010control}.} The goal is to maximize the mean squared error between the attacker's estimated state  and the true state subject to $\lim_{t \rightarrow \infty} \frac{1}{t} D(\tilde{r}_1^t || {r}_1^t) < \epsilon$. Here, $D(\tilde{r}_1^t || {r}_1^t)$ is the Kullback-Leibler divergence (KLD) between the innovation without attack, $r_1^t$, and with attack, $\tilde{r}_1^t$. Their notion of a successful attack only applies when $t \rightarrow \infty $. Thus, their attack strategy can only be applied when a Kalman filter runs for a long time. Instead, we focus on finite, possibly small, number of time steps and do not require $t \rightarrow \infty$. Furthermore, our notion of a successful attack differs from theirs and does not focus on a specific type of detector.}

Various failure detectors have been proposed in the literature. Jones~\cite{jones1973failure} presented one of the first work on failure detection in linear systems. They presented a linear filter that increases the sensitivity of the residual of the filter, which helps to improve the detection of a particular failure. Brumback et al.~\cite{brumback1987chi} presented a $\chi^2$ test for fault detection in Kalman filters. Mo et al.\cite{mo2010false} studied the effect of false data injection attacks on state estimation with a $\chi^2$ failure detectors. 

\ReviseTase{
In this paper, we study how to design spoofing signals that are agnostic to the failure detector. The problem is to minimize the magnitude of the injected signals while still ensuring the desired separation in the filter output. We also provide numerical simulations to show our strategy successfully misleads the $\chi^2$ detector. }

\ReviseTase{\textbf{Contributions}: In this paper, we make the following contributions:
\begin{itemize}
    \item \textbf{Problem:} We formalize the problem of designing false measurement data that is injected to corrupt and mislead the output of a  Kalman filter.  
    Intuitively, if the spoofing measurement is large enough, the spoofed estimation will deviate far away from the true state. However, it will be easy to trigger the alarm by potential failure detectors because of the abnormal estimation. Thus, we study how to design and inject the false measurement data to achieve the desired separation with minimal energy of injected data. This is the first work to formalize this problem.
    \item \textbf{Solution:}  We present an algorithm to solve this problem and show how to bypass residual-based failure detector: 
    \begin{itemize}
        \item \emph{False data design:} We derive the relationship between a spoofed estimation state and the normal state. We show the problem can be solved by Linear Programming optimally and efficiently. 
        \item \emph{Spoofing with failure detector:} We provide strategies and a sufficient condition to design false data for a Kalman filter with a residual-based failure detector (such as a $\chi^2$ detector) without being detected.
    \end{itemize}
    \item \textbf{Evaluation:} We demonstrate with MATLAB simulations that we can archive any desired separation accurately by injecting false measurement data. We also show that by gradually increasing the separation with a given condition, the attack strategy is able to successfully mislead the $\chi^2$ detector. 
\end{itemize}
}

Based on the motion model of the target and the evolution of the KF, three problems for spoofing design are formulated in Section \ref{sec:probform}. Section \ref{sec:Strategies} shows the approaches to solve these optimization problems. The simulations for verifying spoofing strategies are given in Section~\ref{sec:simulation}. Section~\ref{sec:detector} provides a numerical example to illustrate how the proposed spoofing strategy can be applied to a system equipped with a failure detector. Finally, Section~\ref{sec:conc} summarizes the conclusion and future work. 

\section{Problem Formulation} \label{sec:probform}
\noindent\textbf{Notation:} We denote the set of positive real number by $\mathbb{R}^{+}$, the set of positive integer by $\mathbb{Z}^{+}$. The set of real vectors with dimension $n$ is denoted by $\mathbb{R}^{n},~n\in \mathbb{Z}^{+}$, and the set of real matrices with $m$ rows and $n$ columns by $\mathbb{R}^{m\times n}, ~m, n\in \mathbb{Z}^{+}$. We write $\lVert \cdot \lVert_{p}^{p}, ~p\in\mathbb{Z}^{+}$ as the $p^{th}$ power of $L_{p}$ vector norm,  $\mathbb{E}(\cdot)$ as the expectation of a random variable, $I_{n}$ as the identity matrix with size $n, ~n\in \mathbb{Z}^{+}$, and $\mathcal{N}(\mu,\sigma^{2})$ as the normal distribution with mean $\mu$  and variance $\sigma^{2}$.

We consider a scenario where an observer estimates the location of a target using a  KF in an \Revised{$n$--dimensional space}. The target misleads the observer by adding spoofing signals to the observer's measurement. We define the target's  model as: 
\begin{equation}  \label{eqn:target_motion_model}
  x_{t+1}=\mathcal{F}x_{t}+\mathcal{G}u_t+\omega_t,
\end{equation}   
\Revised{where $\mathcal{F},\mathcal{G}\in\mathbb{R}^{n\times n}$, $x_{t}\in\mathbb{R}^n$ is the state of  target, $u_t\in \mathbb{R}^{n}$ is the  control input, $w_t\sim \mathcal{N}(0, R)$ is the Gaussian process noise with $R\in \mathbb{R}^{n\times n}$}. 

The observer estimates the target's measurement  using a linear measurement model:
\begin{equation}
  \label{eqn:realmeasurement}
   z_t= \mathcal{H}x_t +v_t,
\end{equation}
where \Revised{ $\mathcal{H}\in \mathbb{R}^{n\times n}$}  and $v_t \sim \mathcal{N}(0,Q)$ gives the measurement noise with \Revised{$Q\in \mathbb{R}^{n\times n}$}. \ReviseTase{Note that $R$ and $Q$ are both positive semi-definite matrices.}

In order to mislead the observer, the target corrupts the observer's measurement by adding spoofing signal to mislead the observer's estimate. We assume the measurement received by the observer is \Revised{$\tilde{z}_t\in \mathbb{R}^n$} with spoofing signal (Equation~\eqref{eqn:spoofmeasurement}) instead of the true measurement \Revised{$z_t\in \mathbb{R}^n$} without spoofing signal (Equation~\eqref{eqn:realmeasurement}). The spoofing signal \Revised{$\epsilon_t:=[\epsilon_{t1},\cdots \epsilon_{tn}]^T\in \mathbb{R}^n$} adds additional measurement error: 
\begin{equation}
  \label{eqn:spoofmeasurement}
 \tilde{z}_t= z_t + \epsilon_t.
\end{equation}

\begin{figure}
  \centering
  \includegraphics[width=0.65\columnwidth]{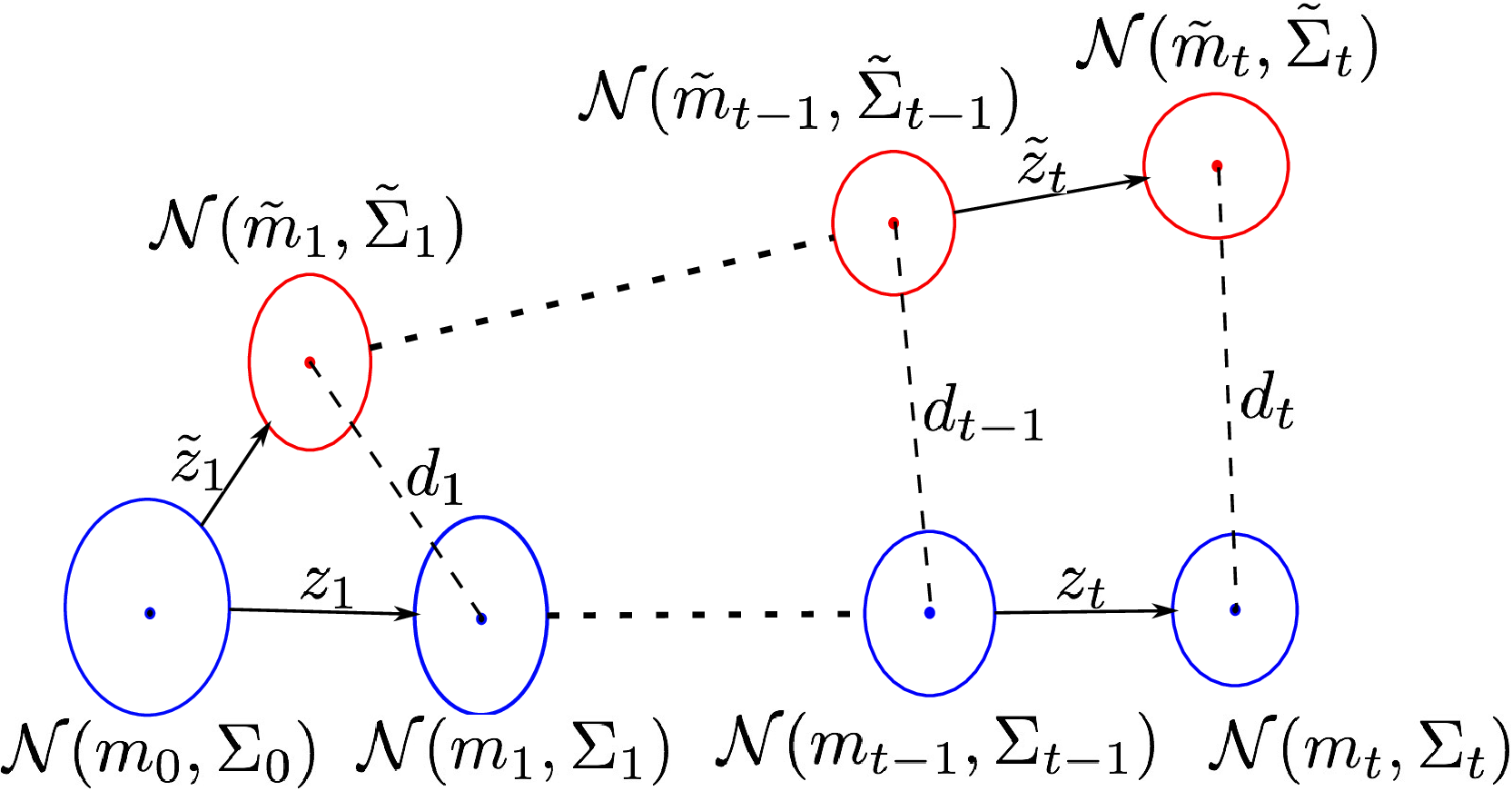}
  \caption{The evolution of KF estimate by applying $z_t$ and  $\tilde{z}_t$, respectively. \Revised{Note that $m_t$ and $\tilde{m}_t$ may also be different initially, i.e., ($\tilde{m}_0 \ne m_0$).}}
  \label{Distance}
\end{figure}

The observer uses a KF to estimate target's state with initial distribution $\mathcal{N}(m_0, \Sigma_0)$. Since it receives the spoofing measurement $\tilde{z}_t$ for updating, we denote distributions generated by the evolution of its KF as $\mathcal{N}(\tilde{m}_t, \tilde{\Sigma}_t)$ when step $t\geq 1, t\in \mathbb{Z}^{+}$. We also denote the distributions generated by the evolution of a KF using true measurement $z_t$ as $\mathcal{N}({m}_t, {\Sigma}_t)$. The goal for the target is to set the separation between the mean estimate $m_t$ and $\tilde{m}_t$. The target's spoofing signal is each step within the planning horizon for which some desired separation, $d_t\geq 0$, must be achieved (Figure~\ref{Distance}). Figure~\ref{explain_distance} shows the target's spoofing process where it uses the initial guess of $\mathcal{N}(m_0, \Sigma_0)$ denoted as $\mathcal{N}(\tilde{m}_0, \tilde{\Sigma}_0)$ and desired separation $d_t$ to design spoofing signal $\epsilon_t$. In order to avoid detection, the targets seeks to minimize the magnitude of the spoofing signal. 

\Revised{Note that, although we use the example of tracking a moving target, the state $x_t$ can be more general. For example, it can represent the state of a power system~\cite{liu2011false}, the state of a networked system\cite{mo2010false}, or the state of a GPS device~\cite{larcom2013modeling}.}
\begin{figure}
  \centering
  \includegraphics[width=0.75\columnwidth]{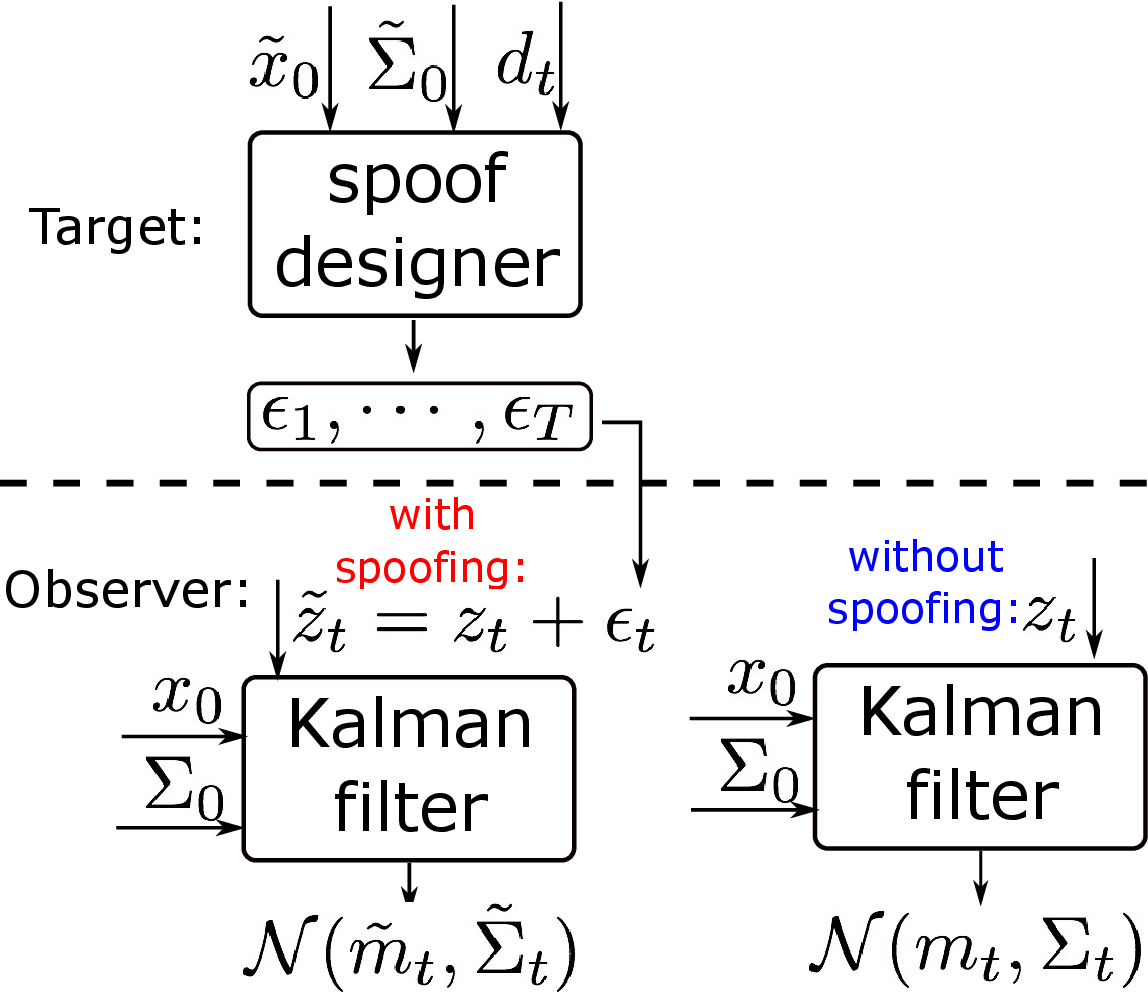}
  \caption{Signal spoofing process and its effect on the observer's KF estimation.}
  \label{explain_distance}
\end{figure}

We first propose two problems for offline scenarios as follows.
\subsection{Offline Spoofing Signal Design with Known  $\mathcal{N}(m_0, \Sigma_0)$}
If the target knows $\mathcal{N}(m_0, \Sigma_0)$ of the KF, then the target can set $\mathcal{N}(\tilde{m}_0, \tilde{\Sigma}_0)$ equal to $\mathcal{N}(m_0, \Sigma_0)$.  

\begin{problem} [Offline with Known $\mathcal{N}(m_0, \Sigma_0)$] \label{pro:problem1}
Consider a target with motion model (Equation~\eqref{eqn:target_motion_model}), measurement model (Equation~\eqref{eqn:realmeasurement}), and spoofing measurement model (Equation~\eqref{eqn:spoofmeasurement}). Assume target knows $\mathcal{N}(m_0, \Sigma_0)$. Find a sequence of spoofing signal inputs, $\{\epsilon_1, \epsilon_2,\cdots,  \epsilon_{T} \}$ to achieve desired separation $d_t$  between $\tilde{m}_t$ and $m_t$ at step $t$. Such that,
\begin{equation*}
  \label{eqn:obj_pro_1}
 \text{minimize}  \quad \sum^{T}_{t=1} \gamma_t \cdot \lVert \epsilon_t\lVert_p^p
\end{equation*}
subject to,
\begin{equation} \label{pro:problem1stconstraint}
\begin{split}
\quad \lVert m_t -& \tilde{m}_t\lVert_{p}^{p} \ge d_{t}^{p},\quad \forall t
      \end{split}
\end{equation}
where $\gamma_t\in\mathbb{R}^{+}$ is a weighing parameter and $T\in \mathbb{Z}^{+}$ is the optimization horizon. 

\end{problem}
\subsection{Offline Spoofing Signal Design with Unknown  $\mathcal{N}(m_0, \Sigma_0)$}
Next, we consider the case where the target does not know the initial condition in the KF. Instead, we assume that the initial estimate $\tilde{m}_0$ is not too far away from $m_0$ (in exception). 
\begin{problem}  [Offline with Unknown $\mathcal{N}(m_0, \Sigma_0)$] \label{pro:problem2}
Consider a target with motion model (Equation~\eqref{eqn:target_motion_model}), measurement model (Equation~\eqref{eqn:realmeasurement}), and spoofing measurement model (Equation~\eqref{eqn:spoofmeasurement}). Assume the target starts spoofing with $\tilde{m}_0$, where $ \mathbb{E}( m_0 - \tilde{m}_0) = M_0$  and $\tilde{\Sigma}_0 \neq {\Sigma}_0$. Find a sequence of spoofing signal inputs, $\{\epsilon_1, \epsilon_2,\cdots,  \epsilon_{T} \}$ to achieve desired separation $d_t$  between $\tilde{m}_t$ and $m_t$ (in expectation) at step $t$. Such that
\begin{equation*}
  \label{eq5}
 \text{minimize}   \quad \sum^{T}_{t=1} \gamma_t \cdot \lVert \epsilon_t\lVert_p^p 
\end{equation*}
subject to,
\begin{equation} \label{problem2st} 
\begin{split}
\quad \lVert \mathbb{E}(m_t -& \tilde{m}_t)\lVert_{p}^{p} \ge d_{t}^{p}, \quad \forall t
\end{split}
\end{equation}
where $\gamma_t\in\mathbb{R}^{+}$ is a weighing parameters and $T\in\mathbb{Z}^{+}$ is the optimization horizon.
\end{problem}
\section{Signal Spoofing Strategies} \label{sec:Strategies}
In this section, we show how to solve Problems~\ref{pro:problem1} and~\ref{pro:problem2} when $p=1$ and $p=2$. We first present the relationship between the separation $m_t - \tilde{m}_t$  and the initial bias $m_0 - \tilde{m}_0$.
\begin{theorem} \label{mk} 
Consider a target with motion model (Equation~\eqref{eqn:target_motion_model}), measurement model (Equation~\eqref{eqn:realmeasurement}), and spoofing measurement model  (Equation~\eqref{eqn:spoofmeasurement}). The evolutions of the KFs by applying $z_t$ and $\tilde{z}_t$ give the distributions $\mathcal{N}(m_t, \Sigma_t)$ and $\mathcal{N}(\tilde{m}_t, \tilde{\Sigma_t})$, respectively. The difference, $m_t  - \tilde{m}_t$ is,
\begin{equation} \label{mt}
\begin{split}
m_t  - \tilde{m}_t = &\prod_{i=0}^{t-1}A_{t-i}\cdot (m_0  - \tilde{m}_0) + \\
&\sum_{i=0}^{t-2} \left( \prod_{j=0}^{i}A_{t-j}  (B_{t-1-i} + C_{t-1-i}) \right) +B_t + C_t,
\end{split}
\end{equation}
where $A_t = \mathcal{F} - \tilde{K}_t \mathcal{H}\mathcal{F},\quad B_t = (K_t-\tilde{K}_t)\left[z_t-\mathcal{H}(\mathcal{F}m_{t-1}+\mathcal{G}u_{t-1})\right], \quad
C_t= - \tilde{K}_t \epsilon_t$. 
\end{theorem}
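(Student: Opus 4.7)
The plan is to reduce the theorem to a single-step linear recursion for the difference $e_t := m_t - \tilde{m}_t$, and then unroll that recursion by induction on $t$. This is a standard strategy for comparing two Kalman filters that see different measurements but share the same dynamics.

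First I would write out both KF updates side by side. For the clean filter, $m_t = Fm_{t-1} + Bu_{t-1} + K_t r_t$ where the innovation is $r_t = z_t - H(Fm_{t-1}+Bu_{t-1})$. For the spoofed filter, $\tilde{m}_t = F\tilde{m}_{t-1} + Bu_{t-1} + \tilde{K}_t \tilde{r}_t$ with $\tilde{r}_t = \tilde{z}_t - H(F\tilde{m}_{t-1}+Bu_{t-1})$. Using $\tilde{z}_t = z_t + \epsilon_t$, I would rewrite
\begin{equation*}
\tilde{r}_t = r_t + \epsilon_t + HF\,(m_{t-1}-\tilde{m}_{t-1}).
\end{equation*}
Substituting this into the $\tilde{m}_t$ equation and subtracting from the $m_t$ equation, the terms involving $Bu_{t-1}$ cancel and the factor $HF(m_{t-1}-\tilde{m}_{t-1})$ combines with the bare $F(m_{t-1}-\tilde{m}_{t-1})$ to produce the coefficient $F-\tilde{K}_t HF = A_t$. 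The remaining residual terms $(K_t-\tilde{K}_t)r_t$ and $-\tilde{K}_t\epsilon_t$ are exactly $B_t$ and $C_t$. This yields the one-step recursion
\begin{equation*}
e_t \;=\; A_t\,e_{t-1} \;+\; B_t \;+\; C_t.
\end{equation*}

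Next I would prove the closed-form expression by induction on $t$. The base case $t=1$ reduces to $e_1 = A_1 e_0 + B_1 + C_1$, which matches the claimed formula (with the summation empty). For the inductive step, I would substitute the assumed formula for $e_{t-1}$ into $e_t = A_t e_{t-1} + B_t + C_t$, distribute $A_t$ through the product $\prod_{i=0}^{t-2}A_{t-1-i}$ to obtain $\prod_{i=0}^{t-1}A_{t-i}$ acting on $(m_0-\tilde{m}_0)$, and relabel the indices in the summation so that the new $\prod_{j=0}^{i}A_{t-j}$ matches the theorem statement. The additive terms $B_t + C_t$ appear directly as the leftover from the new update.

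The main obstacle I anticipate is purely bookkeeping: the telescoping product has to be indexed so that the outermost factor is $A_t$ and the innermost is $A_{t-i}$, and the summation limits must line up so that the inductive step reproduces the exact form in \eqref{mt}. I would handle this by first writing the recursion in the cleaner form $e_t = \bigl(\prod_{k=1}^{t}A_k\bigr)e_0 + \sum_{k=1}^{t-1}\bigl(\prod_{j=k+1}^{t}A_j\bigr)(B_k+C_k) + B_t + C_t$ and only then relabeling via $i = t-1-k$ and $j \mapsto t-j$ to match the stated expression. Everything else (the cancellation of the $Bu_{t-1}$ terms, the factoring of $F - \tilde{K}_t HF$) is routine linear algebra that follows immediately from writing both filter updates in innovation form.
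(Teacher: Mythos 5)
Your proposal is correct and follows essentially the same route as the paper: both derive the one-step recursion $m_t-\tilde{m}_t = A_t(m_{t-1}-\tilde{m}_{t-1})+B_t+C_t$ by subtracting the two filter updates (you organize the algebra via the innovation identity $\tilde{r}_t = r_t+\epsilon_t+HF(m_{t-1}-\tilde{m}_{t-1})$, the paper expands directly, but the cancellation is identical) and then unroll the recursion to the stated closed form.
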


The proof is given in the appendix. 

\begin{corollary} \label{corr:expected}
The expected value of the separation is, 
\begin{equation}
\begin{split} \label{mt_expectation}
&  \mathbb{E}\left(m_t-\tilde{m}_t\right)  \\
&=  \prod_{i=0}^{t-1}A_{t-i} M_0 +\sum_{i=0}^{t-2} \left(\prod_{j=0}^{i}A_{t-j} C_{t-1-i}\right) +  C_t.
\end{split}
\end{equation}
\end{corollary}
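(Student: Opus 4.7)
The plan is to take the expectation of both sides of equation~\eqref{mt} from Theorem~\ref{mk} and exploit linearity. First I would identify which quantities are deterministic and which are random. The Kalman gains $K_t$ and $\tilde{K}_t$, as well as the covariance recursions, depend only on the system matrices $F$, $H$, $R$, $Q$ and the initial covariances $\Sigma_0$, $\tilde{\Sigma}_0$, none of which involve random variables; hence $A_t = F - \tilde{K}_t HF$ is deterministic. The spoofing signal $\epsilon_t$ is a design choice, so $C_t = -\tilde{K}_t \epsilon_t$ is deterministic. Using the assumption $\mathbb{E}(m_0 - \tilde{m}_0) = M_0$ from Problem~\ref{pro:problem2}, the first term of~\eqref{mt} contributes $\prod_{i=0}^{t-1} A_{t-i} M_0$, and each $C$-term passes through the expectation unchanged.

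The heart of the argument is showing that every $B_t$ term vanishes in expectation, because that is the only stochastic contribution in~\eqref{mt}. I would write
\begin{equation*}
z_t - H(Fm_{t-1}+Bu_{t-1}) = H\bigl(x_t - Fm_{t-1} - Bu_{t-1}\bigr) + v_t,
\end{equation*}
then substitute $x_t = Fx_{t-1}+Bu_{t-1}+\omega_{t-1}$ to obtain
\begin{equation*}
z_t - H(Fm_{t-1}+Bu_{t-1}) = HF(x_{t-1}-m_{t-1}) + H\omega_{t-1} + v_t.
\end{equation*}
By unbiasedness of the true-measurement Kalman filter we have $\mathbb{E}(x_{t-1}-m_{t-1}) = 0$, and by assumption $\mathbb{E}(\omega_{t-1}) = \mathbb{E}(v_t) = 0$, so the bracketed term in $B_t$ has zero mean. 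Since $(K_t - \tilde{K}_t)$ is deterministic, $\mathbb{E}(B_t) = 0$, and the same holds after pre-multiplying by any deterministic product $\prod_{j=0}^{i} A_{t-j}$.

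Plugging these expectations back into~\eqref{mt} eliminates all the $B$-terms and reproduces exactly~\eqref{mt_expectation}. The main obstacle is the bookkeeping step of justifying that the filter gains are deterministic and that the true-filter innovation has zero mean; once that is cleanly stated, the corollary follows immediately from linearity of expectation applied term-by-term to Theorem~\ref{mk}.
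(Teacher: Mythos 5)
Your proposal is correct and follows essentially the same route as the paper: take expectations term-by-term in Theorem~\ref{mk}, note that the gains, $A_t$, and $C_t$ are deterministic, and show $\mathbb{E}(B_t)=0$ because the innovation $z_t - H(Fm_{t-1}+Bu_{t-1})$ has zero mean. Your justification of that last fact (decomposing via $x_{t-1}-m_{t-1}$, $\omega_{t-1}$, and $v_t$ and invoking unbiasedness of the true-measurement filter) is in fact a bit more careful than the paper's, which informally substitutes the estimate for the state, but the argument is the same.
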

\begin{proof}
From Equation~\ref{mt}, $\mathbb{E}\left(m_t-\tilde{m}_t\right)$ follows,
\begin{equation*} 
\begin{split}
&  \mathbb{E}\left(m_t-\tilde{m}_t\right)   \\
&=  \mathbb{E} \left(\sum_{i=0}^{t-2}  \prod_{j=0}^{i}A_{t-j}\cdot B_{t-1-i}    +  B_t \right) + \\
&\quad\prod_{i=0}^{t-1}A_{t-i} \mathbb{E} (m_0  - \tilde{m}_0) +\sum_{i=0}^{t-2} \left( \prod_{j=0}^{i}A_{t-j} C_{t-1-i}\right) +  \tilde{K}_t\epsilon_t.
\end{split}
\end{equation*}
The actual measurement is: $z_i = \mathcal{H}(\mathcal{F}m_{i-1} +\mathcal{G}u_{i-1}+w_i)+v_i$, where $w_i$ and $v_i$ are Gaussian noises with zero mean. The expected measurement value is: $\mathbb{E}(z_i)= \mathcal{H}( \mathcal{F}m_{i-1}+ \mathcal{G}u_{i-1})$ for all $i$, thus $\mathbb{E}[z_i- \mathcal{H}( \mathcal{F}m_{i-1}+ \mathcal{G}u_{i-1})]=0$.  Since $\mathbb{E}[B_i] = 0$, we have, 
\begin{equation}
\begin{split} 
&  \mathbb{E}\left(m_t-\tilde{m}_t\right)  \\
&=  \prod_{i=0}^{t-1}A_{t-i}\mathbb{E} (m_0  - \tilde{m}_0) +\sum_{i=0}^{t-2} \left(\prod_{j=0}^{i}A_{t-j} \tilde{K}_{t-1-i} \epsilon_{t-1-i}\right)\\
&\quad +  \tilde{K}_t\epsilon_t.
\end{split}
\end{equation}
Since we assume $ \mathbb{E}( m_0 - \tilde{m}_0) = M_0$ in Problem~\ref{pro:problem2}, the claim is guaranteed.
\end{proof}


Theorem \ref{mk} shows the difference between the two estimated means at step $t$ depends on the initial means, $m_0$ and $\tilde{m}_0$, and the initial covariance matrices $\Sigma_0$ and $\tilde{\Sigma}_0$. This is because the Kalman gain $K_t$ depends on the covariance matrix $\Sigma_t$. If target sets $m_0=\tilde{m}_0$ and $\Sigma_0 = \tilde{\Sigma}_0$, it has $\Sigma_t = \tilde{\Sigma}_t$ for all $t$ since the covariance matrix is updated through the same Kalman prediction and update equation (see appendix). Thus, $B_t = 0_{2\times 2}$ and then Equation \eqref{mt} can be simplified as:
$$  m_t  - \tilde{m}_t =\sum_{i=0}^{t-2} \left(\prod_{j=0}^{i}A_{t-j}  C_{t-1-i}\right) + C_t. $$
As a result, $m_t  - \tilde{m}_t $ is independent of the measurements $\{z_1, z_2, \cdots, z_t\}$ when $m_0=\tilde{m}_0$ and $\Sigma_0 = \tilde{\Sigma}_0$. Thus, the target can generate spoofing signal inputs by solving Problem \ref{pro:problem1} offline. Similarly, following Corollary~\ref{corr:expected}, Problem~\ref{pro:problem2} can be saved offline as well.

 Problems~\ref{pro:problem1} and~\ref{pro:problem2} are two nonlinear programming problems for arbitrary vector norms $L_p$. However, 
when $p=1$, they can be formulated as linear programming problems. Linear programming can be solved in polynomial time \cite{karmarkar1984new}. When $p=2$, they become QCQP (Quadratically Constrained Quadratic Program). The following shows the LP and QCQP formulations.
\begin{theorem}
If $p=1$ and the elements in $ \mathcal{F}$ and $I-K_t \mathcal{H}$ are all positive, then Problems~\ref{pro:problem1} and \ref{pro:problem2} can be solved optimally with linear programming. If $p=1$ and the elements in $ \mathcal{F}$ and $I-K_t \mathcal{H}$ are not all positive, then Problems~\ref{pro:problem1} and \ref{pro:problem2} can be solved optimally with $4^k$ linear programming instances. If $p=2$ and $\{ \mathcal{H}, \mathcal{F},Q,R\}$ are diagonal matrices, then Problems~\ref{pro:problem1} and \ref{pro:problem2} can be solved optimally with linear programming. 
\end{theorem}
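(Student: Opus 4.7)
The plan is to reduce each case to one or more linear programs by first using Theorem~\ref{mk} and Corollary~\ref{corr:expected} to express the separation affinely in the spoofing signals, then resolving the non-convex reverse $L_p$-norm constraints by a sign-pattern case analysis. Substituting $C_t=-\tilde{K}_t\epsilon_t$ into \eqref{mt} yields
\[
m_t-\tilde{m}_t \;=\; L_t+\sum_{i=1}^t N_{t,i}\,\epsilon_i,
\]
with $L_t=0$ for Problem~\ref{pro:problem1} and $L_t=\bigl(\prod_{i=0}^{t-1}A_{t-i}\bigr)M_0$ (in expectation) for Problem~\ref{pro:problem2}; the $N_{t,i}$ are offline-computable products of $A_j$'s and $\tilde{K}_j$'s. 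Both problems then become: minimize $\sum_t\gamma_t\|\epsilon_t\|_p^p$ over $\epsilon\in\mathbb{R}^{2T}$ subject to $\|L_t+\sum_i N_{t,i}\epsilon_i\|_p^p\ge d_t^p$. The objective is convex but the reverse-norm constraint is not, and resolving this non-convexity is the main obstacle.

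For $p=1$, I would linearize $\|\epsilon_t\|_1$ with auxiliary variables $u_{ix},u_{iy}\ge|\epsilon_{ix}|,|\epsilon_{iy}|$, replacing the objective by $\sum_t\gamma_t(u_{tx}+u_{ty})$ together with linear bounding constraints. Writing $a_t,b_t$ for the coordinates of $m_t-\tilde{m}_t$, the constraint $|a_t|+|b_t|\ge d_t$ is equivalent to the disjunction ``at least one of $s_xa_t+s_yb_t\ge d_t$ holds for $(s_x,s_y)\in\{\pm1\}^2$.'' Branching over one such choice per time step yields $4^T$ LPs, and the best optimum among them solves the original problem, giving the second claim. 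For the first claim, when the entries of $F$ and $I-K_tH$ are all nonnegative, then $A_t=(I-\tilde{K}_tH)F$ and $\tilde{K}_t$ are entrywise nonnegative, so each $N_{t,i}$ is nonnegative and every component of $m_t-\tilde{m}_t$ is a sign-preserving combination of $\epsilon$-coordinates; the optimal spoofing signs can therefore be fixed a priori, collapsing the $4^T$ LPs to a single one.

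For $p=2$ with $H,F,Q,R$ diagonal, I would first show by induction on $t$ that the KF recursion preserves diagonality of $\tilde{\Sigma}_t$: the predicted covariance $F\tilde{\Sigma}_{t-1}F^T+R$ is diagonal, and hence so are the Kalman gain $\tilde{K}_t$ and the updated $\tilde{\Sigma}_t$. Consequently every $N_{t,i}$ is diagonal, the $x$- and $y$-channels decouple, and in each scalar channel the reverse-quadratic constraint $a_t^2\ge d_t^2$ splits into the two half-spaces $\pm a_t\ge d_t$. Branching on the sign per channel per time step reduces the feasible region to a union of polyhedra over which the problem becomes separable across time. The hardest step will be making the final reduction to a linear (rather than quadratic) program rigorous, since the objective $\sum_t\gamma_t\epsilon_t^2$ is still quadratic within each channel; I expect it requires exploiting the one-dimensional monotone structure after sign-branching to argue that the quadratic optimum is attained at the same boundary point as that of a linear surrogate in the $|\epsilon|$-variables.
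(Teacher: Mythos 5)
Your treatment of the two $p=1$ claims is essentially the paper's argument: you write $m_t-\tilde{m}_t$ affinely in the $\epsilon_i$ via Theorem~\ref{mk} and Corollary~\ref{corr:expected}, resolve the reverse $L_1$ constraint $|a_t|+|b_t|\ge d_t$ by enumerating the four sign patterns of the two coordinates per active constraint (the paper counts $4^k$ over the $k$ nonzero separations rather than $4^T$, but this is the same idea in the worst case), and observe that when the entries of $F$ and $I-K_tH$ are positive all coefficients are nonnegative, so one sign pattern is optimal a priori and a single LP suffices. This matches Lemma~\ref{lemma_linear} and the surrounding discussion, and shares with the paper the unargued step that $\tilde{K}_t$ itself is entrywise nonnegative under the stated hypotheses.

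The genuine gap is in the $p=2$ claim. First, the constraint there is $\lVert m_t-\tilde{m}_t\rVert_2^2=a_t^2+b_t^2\ge d_t^2$; it does not decompose into per-channel conditions $a_t^2\ge d_t^2$ even when the channels decouple, and after fixing signs it is the complement of a disk, not a union of finitely many half-spaces, so your sign-branching cannot yield polyhedral feasible regions. Second, you explicitly leave open how the quadratic objective becomes linear, and your proposed ``monotone structure'' argument is not a proof. The paper's route, which you are missing, is a change of variables: set $y_{tx}=\epsilon_{tx}^2\ge 0$ and $y_{ty}=\epsilon_{ty}^2\ge 0$. The objective $\sum_t\gamma_t\lVert\epsilon_t\rVert_2^2=\sum_t\gamma_t(y_{tx}+y_{ty})$ is then linear, and when $\{H,F,Q,R\}$ are diagonal the gains $\tilde{K}_i$ and the matrices $A_j$ are diagonal, so the quadratic form in the constraint is diagonal in the $\epsilon$-coordinates and hence affine in $y$; the whole problem collapses to one LP in $y\ge 0$, with $\epsilon$ recovered up to sign by square roots. (Even then, one must check that the cross terms $\epsilon_{ix}\epsilon_{jx}$ arising from squaring $\sum_i N_{t,i}\epsilon_i$ are genuinely absent or controlled --- the paper's matrix $D_t$ quietly replaces $\lVert\sum_i N_{t,i}\epsilon_i\rVert^2$ by $\sum_i\lVert N_{t,i}\epsilon_i\rVert^2$ --- so any rigorous version of the third claim must address this point, which neither you nor the paper currently does.)
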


\subsection{Linear Programming Formulation  for $L_1$ Vector Norm}
\label{sec:$L_1$ norm}
Here, we show how to formulate Problem~\ref{pro:problem1} using linear programming. A similar procedure can be applied to formulate Problem~\ref{pro:problem2} as linear programming.

The constraint in Problem~\ref{pro:problem1} (Equation~\ref{pro:problem1stconstraint}) follows:
\begin{align}\label{replacePro1}
&\lVert m_t-\tilde{m}_t\lVert_1  =\left\lVert\sum_{i=0}^{t-2} \left(\prod_{j=0}^{i}A_{t-j}  C_{t-1-i}\right) + C_t \right\lVert_1\nonumber\\
&= \left\lVert \sum_{i=0}^{t-2} \left( \prod_{j=0}^{i}A_{j+1}\cdot \tilde{K}_{t-1-i}  \cdot \epsilon_{t-1-i}\right) + \tilde{K}_t \epsilon_t \right\lVert_1\nonumber\\
&\geq d_t,
\end{align}
where $t=1,2,\cdots,T$. $\prod_{j=0}^{i}A_{t-j}\cdot \tilde{K}_i\in \mathbb{R}^{2\times 2}$ is a constant matrix for each $i\in\{1,\cdots,t-1\}$ and is calculated from the KF iteration with initial covariance $\Sigma_0$ and $\tilde{\Sigma}_0$. 
Since $L_{1}$ vector norm is the sum of the absolute values of the elements for a given vector, Problem~\ref{pro:problem1} can be directly formulated as a linear programming problem when $p=1$. 

Then we show how to transform this constraint to a standard linear constraint form $G_t x_t \ge d_t$. \Revised{To simplify the equation, we use a 2-D case as an example}, with $x_t:=[\epsilon_{1x},\cdots,\epsilon_{tx}, \epsilon_{1y},\cdots,\epsilon_{ty}]^{T}$. The problem can be extended to $n$ dimension follow the same idea.  The left side of Equation \eqref{replacePro1} can be formulated as
\begin{equation}
\begin{split}
\lVert m_t - \tilde{m}_t\lVert_1=
\bigg\lVert                 
  \begin{array}{cc} 
   a_0+ a_{1} \epsilon_{1x} +\cdots a_{t}\epsilon_{tx} +\cdots+ a_{2t}\epsilon_{ty}  \\
   b_0+ b_{1} \epsilon_{1x} +\cdots b_{t}\epsilon_{tx} + \cdots + b_{2t}\epsilon_{ty} \\
  \end{array}
\bigg\lVert_1 
\label{eqn:specific_constraint_example}
\end{split}
\end{equation}
where $a_0,a_{1},\cdots,a_{2t},b_0,b_{1},\cdots,b_{2t}$ are corresponding coefficients from Equation~\ref{mt}.
\begin{lemma}
\label{lemma_linear}
If the elements in matrices $ \mathcal{F}$ and $I - K_t \mathcal{H}$ are positive, then  $\lVert m_t - \tilde{m}_t\lVert_1$ is a linear combination of $|\epsilon_{ix} |$ and $|\epsilon_{iy}|$, and Problem~\ref{pro:problem1} can be solved as a single LP instance.
\end{lemma}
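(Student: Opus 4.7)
My plan is to show that, under the positivity hypotheses on $F$ and $I-K_tH$, each scalar component of $m_t-\tilde m_t$ is a sign-definite linear form in the spoofing variables, so that $\|m_t-\tilde m_t\|_1$ and the $L_1$ objective both become linear in auxiliary nonnegative variables $y_{ix}:=|\epsilon_{ix}|$ and $y_{iy}:=|\epsilon_{iy}|$, reducing Problem~\ref{pro:problem1} to a single LP.

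First, since Problem~\ref{pro:problem1} sets $\tilde m_0=m_0$ and $\tilde\Sigma_0=\Sigma_0$, the two Kalman iterations share all their gains, so $K_t=\tilde K_t$; hence the $B_t$ terms in Theorem~\ref{mk} vanish, and the simplified expression $m_t-\tilde m_t=\sum_{i=0}^{t-2}\bigl(\prod_{j=0}^{i}A_{t-j}\bigr)C_{t-1-i}+C_t$ (with $C_s=-\tilde K_s\epsilon_s$) applies. Substituting into Equation~\ref{eqn:specific_constraint_example} identifies the $a_k,b_k$ as entries of the coefficient blocks $-\bigl(\prod_{j=0}^{i}A_{t-j}\bigr)\tilde K_{t-1-i}$, with $a_0=b_0=0$ because the initial separation is zero. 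Writing $A_t=(I-\tilde K_tH)F$ and invoking the hypothesis that both factors are entrywise nonnegative, an elementary induction shows every prefix product $\prod_{j=0}^{i}A_{t-j}$ is entrywise nonnegative as well.

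Next, since both the objective and the separation constraints depend on $\epsilon$ only through magnitudes, I would pass to signed lifts $\epsilon_{ix}=s_{ix}y_{ix}$, $\epsilon_{iy}=s_{iy}y_{iy}$ with $y\ge 0$ and $s\in\{+1,-1\}$. The key observation is that the two rows of each coefficient block $(\prod A)\tilde K_s$ inherit a common column-wise sign pattern from the columns of $\tilde K_s$, because they differ only by premultiplication by the nonnegative rows of $\prod A$. This lets one pick signs $s_{ix},s_{iy}$ so that both rows of $m_t-\tilde m_t$ evaluate to nonnegative linear combinations of $y_{ix},y_{iy}$ simultaneously. Consequently $\|m_t-\tilde m_t\|_1$ collapses to the sum of those two forms, i.e.\ a linear combination $\sum_i(\alpha_i^{(t)}y_{ix}+\beta_i^{(t)}y_{iy})$ with $\alpha_i^{(t)},\beta_i^{(t)}\ge 0$ precomputable from the Kalman recursion. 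Plugging these linearized norms into Problem~\ref{pro:problem1} yields a linear objective $\sum_{t=1}^{T}\gamma_t(y_{tx}+y_{ty})$ together with $T$ linear lower-bound constraints over nonnegative variables, a single LP in $2T$ variables solvable in polynomial time via~\cite{karmarkar1984new}.

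I expect the main difficulty to lie in the sign-alignment step: the coefficient blocks terminate in $\tilde K_s$, which is defined through a matrix inverse and is not obviously sign-controlled by the hypothesis on $I-K_tH$ alone. Making the argument airtight will require either deriving a column-wise sign structure for $\tilde K_s$ directly from the Kalman gain formula, or, more robustly, exploiting that the two rows of $m_t-\tilde m_t$ share the common nonnegative prefactor $\prod A_{t-j}$ so that the same sign choice on each $\epsilon_i$ aligns its contribution to both rows at once; the $4^k$ branching in Theorem~2 is precisely what this alignment is intended to eliminate.
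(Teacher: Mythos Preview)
Your plan lands on essentially the paper's argument, but you have made it considerably more elaborate than what the paper actually does. The paper's proof is three sentences: it asserts that, from the recursion in the appendix, the positivity of $F$ and $I-K_tH$ forces every coefficient $a_1,\dots,a_{2t},b_1,\dots,b_{2t}$ in Equation~\eqref{eqn:specific_constraint_example} to be positive; it then observes that both the objective and the constraints depend only on $|\epsilon_{ix}|,|\epsilon_{iy}|$, so without loss of generality one may take all $\epsilon_{ix},\epsilon_{iy}\ge 0$, yielding a single LP. There is no sign-lift, no column-wise sign-pattern analysis, and no per-variable $s_{ix}\in\{\pm 1\}$ choice; the paper simply fixes all signs to $+1$ once the coefficients are declared positive.

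The gap you flag in your last paragraph is real and is shared by the paper's own proof. The hypothesis controls the entries of $A_t=(I-\tilde K_tH)F$ and hence of every prefix product $\prod_j A_{t-j}$, but the coefficient blocks are $\bigl(\prod_j A_{t-j}\bigr)\tilde K_{t-1-i}$, and nothing in the stated assumptions forces $\tilde K_s$ (a product involving an inverse) to be entrywise positive. Your proposed workaround---that the two rows of $(\prod A)\tilde K_s$ ``inherit a common column-wise sign pattern'' because they are both nonnegative combinations of the rows of $\tilde K_s$---does not hold unless each column of $\tilde K_s$ is itself sign-definite; a column with mixed signs can produce opposite-sign entries under different nonnegative row combinations. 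So your alignment idea does not close the gap any more than the paper's bare assertion does. If you want to match the paper, drop the sign-lift apparatus and simply state (as the paper does) that the coefficients are positive; if you want to be more rigorous than the paper, you would need an additional hypothesis on $\tilde K_t$ (e.g., entrywise positivity, which holds automatically in the diagonal case the paper emphasizes elsewhere).
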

\begin{proof}
According to the proof of Theorem~\ref{mk} appendix, all the coefficients $\{a_1,...,a_{2t}, b_1,...,b_{2t}\}$ are positive if the elements in matrices $ \mathcal{F}$ and $I - K_t \mathcal{H}$ are positive. Therefore, the objective function and the constraints are linear in $|\epsilon_{ix} |$ and $|\epsilon_{iy}|$. There always exists an optimal solution where all $\epsilon_{ix} \geq 0$ and $\epsilon_{iy}\geq 0$ or where all $\epsilon_{ix} \leq 0$ and $\epsilon_{iy}\leq 0$. The objective function in both cases will be the same. Without loss of generality, we can assume $\epsilon_{ix} \geq 0$ and $\epsilon_{iy}\geq 0$, which can be solved using a single LP instance.
\end{proof}

The linear programming strategy  containing $k$ constraints is presented in Algorithm~\ref{alg:linear_programming}. $G$ denotes matrix in the linear constraint $G x \geq D_k$ where $x:=[\epsilon_{1x},\cdots,\epsilon_{Tx}, \epsilon_{1y},\cdots,\epsilon_{Ty}]^{T}$ and $D_k$ is the collection of $k$ nonzero separations $d_t, ~t\in\{1,\cdots, T\}$.
\begin{algorithm}\label{alg:linear_programming}
    \SetAlgoLined
$\mathbf{Initial}\leftarrow\left\{(x_o, \Sigma_0, \mathcal{F},\mathcal{H},  \mathcal{G}, Q, R, u \right\}$\\  
$G\leftarrow 0_{k\times n\cdot T}$ \\
Calculate Kalman gain $\tilde{K}_1,\cdots,\tilde{K}_T$ \\
\For{$i=1$ to the $q_{th}$ value in $D_k$}{
$g = \prod_{j=i}^{T-1}A_{j+1} \tilde{K}_i$\label{algline:g}\;
$G_{q,i} = \text{sum  of  all  rows in}$ $g$
}
$\mathbf{Return} \quad G$
    \caption{Linear Programming Formulation}
    \label{minimax} 
\end{algorithm}

If Lemma \ref{lemma_linear} does not  hold, it is possible that some elements in $a_0,a_{1},\cdots,a_{2t},b_0,b_{1},\cdots,b_{2t}$ can be positive and some are negative. In general,  there are four different cases depending on the sign of the first row and the second row for considering each constraint $\lVert m_t - \tilde{m}_t\lVert_1 \geq d_t$ (Equation~\ref{eqn:specific_constraint_example}). Then we can obtain four linear optimization problems along four different sub-constraints of each constraint $\lVert m_t - \tilde{m}_t\lVert_1 \geq d_t$. Thus, in the worst case, the optimal solution can be obtained by solving $4^k$ linear optimization problems. We run Algorithm~\ref{alg:linear_programming} $4^k$ times by changing the sign of rows in $g$ (Line~\ref{algline:g}) appropriately.

\subsection{Quadratically Constrained Quadratic Program Formulation for $L_2$ Vector Norm}
When $p=2$, Problems~\ref{pro:problem1} and~\ref{pro:problem2} can be formulated as  QCQPs\cite{boyd2004convex}:

\begin{equation}
\begin{split}
&\text{minimize} \quad  \frac{1}{2}x_{\epsilon}^{T}P_0 x_\epsilon \\
&\text{s.t.}\quad -\frac{1}{2} x^T_\epsilon D^T_t D_t x_\epsilon + d_{t}^{2} \le 0 ,\quad\forall t\in\{1,\dots,T\}\\
\label{eqn:QCQP_forthree}
\end{split}
\end{equation}

For simplify the equation, we use a 2-D case as the example, where $x_{\epsilon}=[\epsilon^2_{1x},\epsilon^2_{1y},\cdots,\epsilon^2_{Tx},\epsilon^2_{Ty}]^{T}$, $P_0=I_{2T}$, and 

$D_t\in \mathbb{R}^{2T \times 2T} :=$ 
\[
  \begin{bmatrix}
    \prod_{j=1}^{t-1}A_{j+1} \tilde{K}_0 & &\cdots &0 & 0 &0\\
    \vdots& \ddots &\vdots &\vdots &\vdots &\vdots\\
    0& \cdots& \prod_{j=t-1}^{t-1}A_{j+1} \tilde{K}_{t-1} &0 &0 &0\\
    0& \cdots& 0&  \tilde{K}_t & 0 & 0\\
    0& \cdots& 0&  0 & 0 & 0\\
    0& \cdots& 0&  0 & 0 & \ddots
  \end{bmatrix}
\]

Unfortunately, the QCQP formulations for these three problems are NP-hard since the constraint in each problem is concave. If $ \mathcal{F}, \mathcal{G}, \mathcal{H},\tilde{\Sigma_0}$ are diagonal matrices, it can be shown that $D_t$ is also a diagonal matrix. We can transform the QCQP formulation to a linear programming problem by using change of variables $\{\epsilon^2_{tx},\epsilon^2_{ty}\}, ~t=\{1,2,...,T\}$, and using  a procedure similar to $p=1$.

If $D_t$ is not a diagonal matrix, one solution is to apply the inequality $\sqrt[]{2} \lVert x \lVert_2 \ge  \lVert x \lVert_1$ between $L_1$ vector norm and $L_2$ vector norm. The constraint can be changed to  $L_1$ vector norm, which is a stricter constraint. A  sub-optimal solution can be obtained by using the $L_1$ vector norm. 
\subsection{Receding Horizon: Spoofing with online measurement}\label{pro:problem3}
 Problems \ref{pro:problem1} and \ref{pro:problem2} describe the offline versions for spoofing. We also extend the offline problems to an online version. The following formulates an online spoofing  scenario.
 
Consider a target with motion model (Equation~\eqref{eqn:target_motion_model}), measurement model (Equation~\eqref{eqn:realmeasurement}), and spoofing measurement model (Equation~\eqref{eqn:spoofmeasurement}). Assume the target does not know $\mathcal{N}({x}_0, {\Sigma}_0)$. It collects a series of measurements $\{z^{real}_1, z^{real}_2, \cdots, z^{real}_{t^o} \}$ from step $1$ to current step $t^{o}$. Find a sequence of spoofing signal inputs, $\{\epsilon_{t^{o}}, \epsilon_{t^{o}+1},\cdots,  \epsilon_{t^{o}+H} \}$ to achieve desired separation $d_t$  between $\tilde{m}_t$ and $m_t$ (in expectation) within future $H$ steps. Such that
\begin{equation*}
  \label{eq5}
  \text{minimize}   \quad \sum^{t^o+H}_{t=t^{0}} \gamma_t \cdot \lVert \epsilon_t\lVert_p^p
\end{equation*}
\begin{equation} \label{pro:pro3_constrain}
\begin{split}
\text{s.t.} \quad \lVert \mathbb{E}(m_t -& \tilde{m}_t)\lVert_{p}^{p} \ge d_{t}^{p} ,\quad \forall t\in\{t^o,\cdots,t^o+H\}
\end{split}
\end{equation}
where $\gamma_t\in\mathbb{R}^{+}$ is a weighing parameter, $t^o$ is the current time, and $H$ is the predictive time horizon. The target applies $\epsilon_{t}=\epsilon_{t^o}$ as spoofing signal input at each step $t$.

\section{Simulations} \label{sec:simulation}
In this section, we simulate the effectiveness of spoofing strategies for Problems~\ref{pro:problem1},~\ref{pro:problem2} and online case (Section~\ref{pro:problem3}) where a target designs spoofing signals $\epsilon_t$ to  mislead an observer by achieving the desired separations $d_t$ between $m_t$ and $\tilde{m}_t$. Our code  is available online.\footnote{\url{https://github.com/raaslab/signal_spoofing.git}}

We consider the $L_1$  vector norm and the following models,
$$ \mathcal{F}=I_{2\times2},  \mathcal{G}=I_{2\times2}, u= \left[\begin{matrix}
  1  \\
   1  \\
  \end{matrix} \right], 
  R =0.5I_{2\times2}, 
    Q =0.5I_{2\times2}.\\$$
Set the weight $\gamma_t=1$ for all $t$. 
 
For Problem~\ref{pro:problem1}, set the initial condition for the KF as,  
$$\Sigma_0 =I_{2\times2},
     ~m_0=\left[\begin{matrix}
  0  \quad  0  
    \end{matrix}\right]^T.
     $$
Since the target knows $\mathcal{N}({x}_0, {\Sigma}_0)$, it sets $\tilde{m}_0=m_0$ and $\tilde{\Sigma}_0 = \Sigma_0$. We first consider a scenario where the target wants to achieve the desired separation at steps, $t=5, 10, 15$, denoted as $d_5=1.77$, $d_{10}=3.54$ and $d_{15}=5.30$ with the optimization horizon $T=20$. The target generates  a sequence of spoofing signals $\{\epsilon_1,\cdots,\epsilon_{20}\}$ offline by using a linear programming solver. The spoofing performance is shown in Figure~\ref{Simulation_problem1}-(a) where the true separations are the same as the desired separations. Same successful spoofing achieved when the desired separations are chosen as $d_t = 0.25 \sqrt{2} t,~t=\{3,...,15\}$, as shown in Figure~\ref{Simulation_problem1}-(b). 

\Revised{The problem formulation applies  in higher dimensional systems as well, not just 2D. Figure~\ref{3Dexample} shows an example of misleading a KF in a 3D environment.}

\begin{figure} 
\centering{
\subfigure[Desired separations, $d_5 = 1.77$ and $d_{10}=3.54$, with $T=20$.]{\includegraphics[width=0.8\columnwidth]{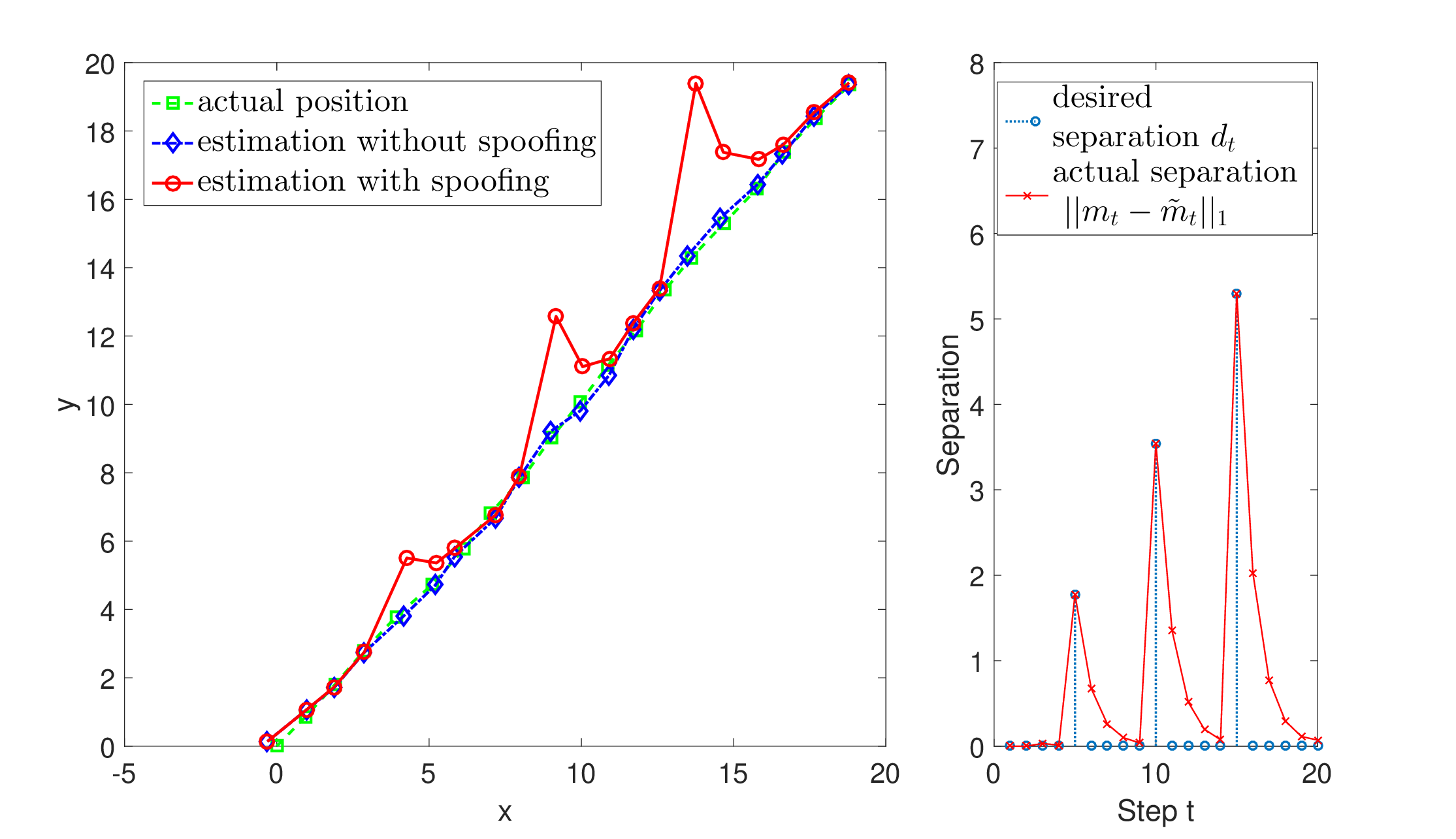}}
\subfigure[Desired separations, $d_t = 0.25 \sqrt{2}t$, with $t=3$ to $T=15$.]{\includegraphics[width=0.8\columnwidth]{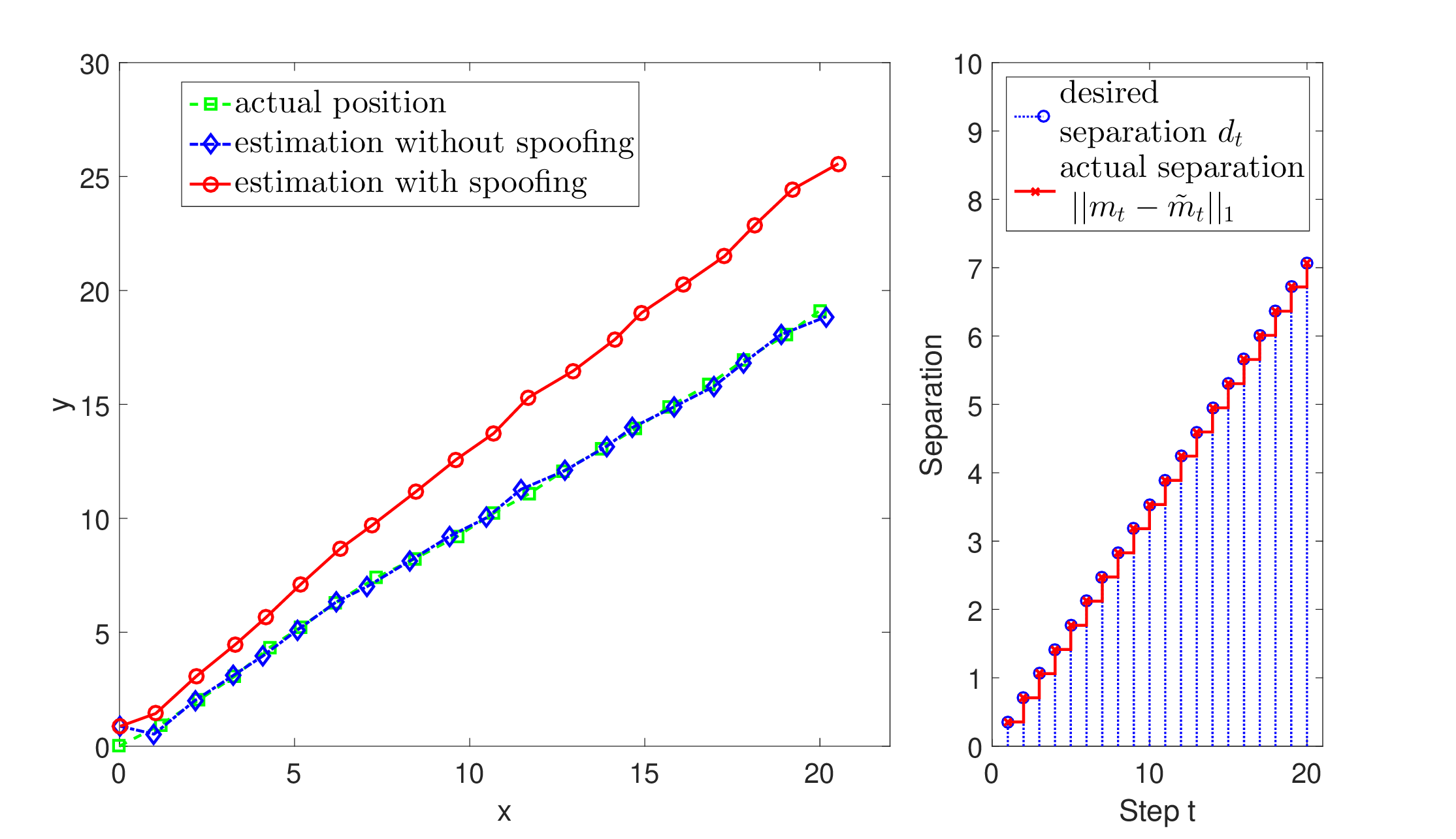}}
}
\caption{Offline signal spoofing with known ($m_0, \Sigma_0$).}
\label{Simulation_problem1}         
\end{figure}

\begin{figure} 
\centering
\includegraphics[width=0.8\columnwidth]{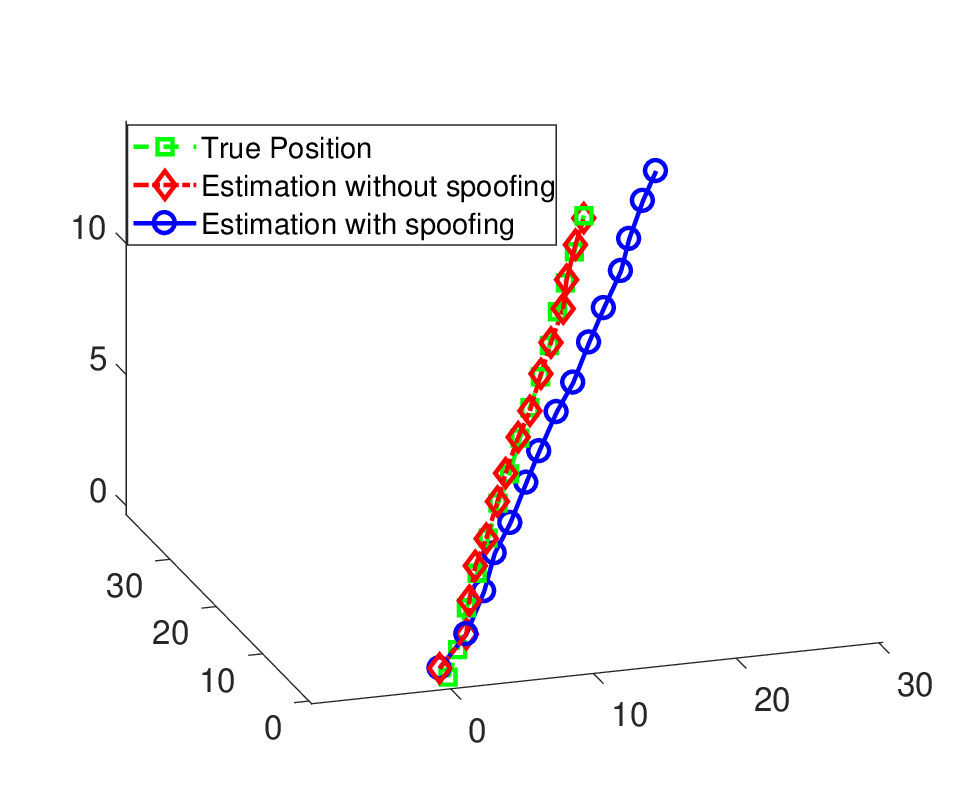}
\caption{\Revised{Signal spoofing with known ($m_0, \Sigma_0$) in 3D environment. Desired separations, $d_3 = 3, d_3 = 4,\cdots,d_{10} = 10$.} }
\label{3Dexample}             
\end{figure}

In Problem \ref{pro:problem2}, the target knows $ \mathbb{E}(m_0 - \tilde{m}_0) = M_0$ but does not know $\Sigma_0$. The spoofing result is no longer deterministic but holds in expectation $\lVert\mathbb{E}( m_t - \tilde{m}_t)\lVert_1 \ge d_t$. Figure~\ref{Simulation_problem2}-(a) shows spoofing signals for desired separations as $d_1=2$ with  $T=6$ and $M_0=1$. Set $\mathcal{N}(\tilde{m}_0,\tilde{\Sigma}_0)$ as $\mathcal{N}(0,1.5I_{2})$, $m_0$ as a random variable ($m_0 \sim \mathcal{N}(1,1)$) and ${\Sigma}_0 = I_{2}$. In order to see the effectives of the spoofing signals $\{\epsilon_1,\cdots,\epsilon_5\}$, we conduct 100 trials for each desired separation $d_2\in\{1,2,3,4,5\}$. Figure~\ref{Simulation_problem2}-(b) shows the $\lVert m_1 - \tilde{m}_1 \lVert_1$ is no longer deterministic, but  $\lVert \mathbb{E}(m_1 - \tilde{m}_1) \lVert_1 $ is  close to the desired value $d_1=2$. 
\begin{figure} 
\centering
\subfigure[Desired separation, $d_1 = 2$.]{\includegraphics[width=0.8\columnwidth]{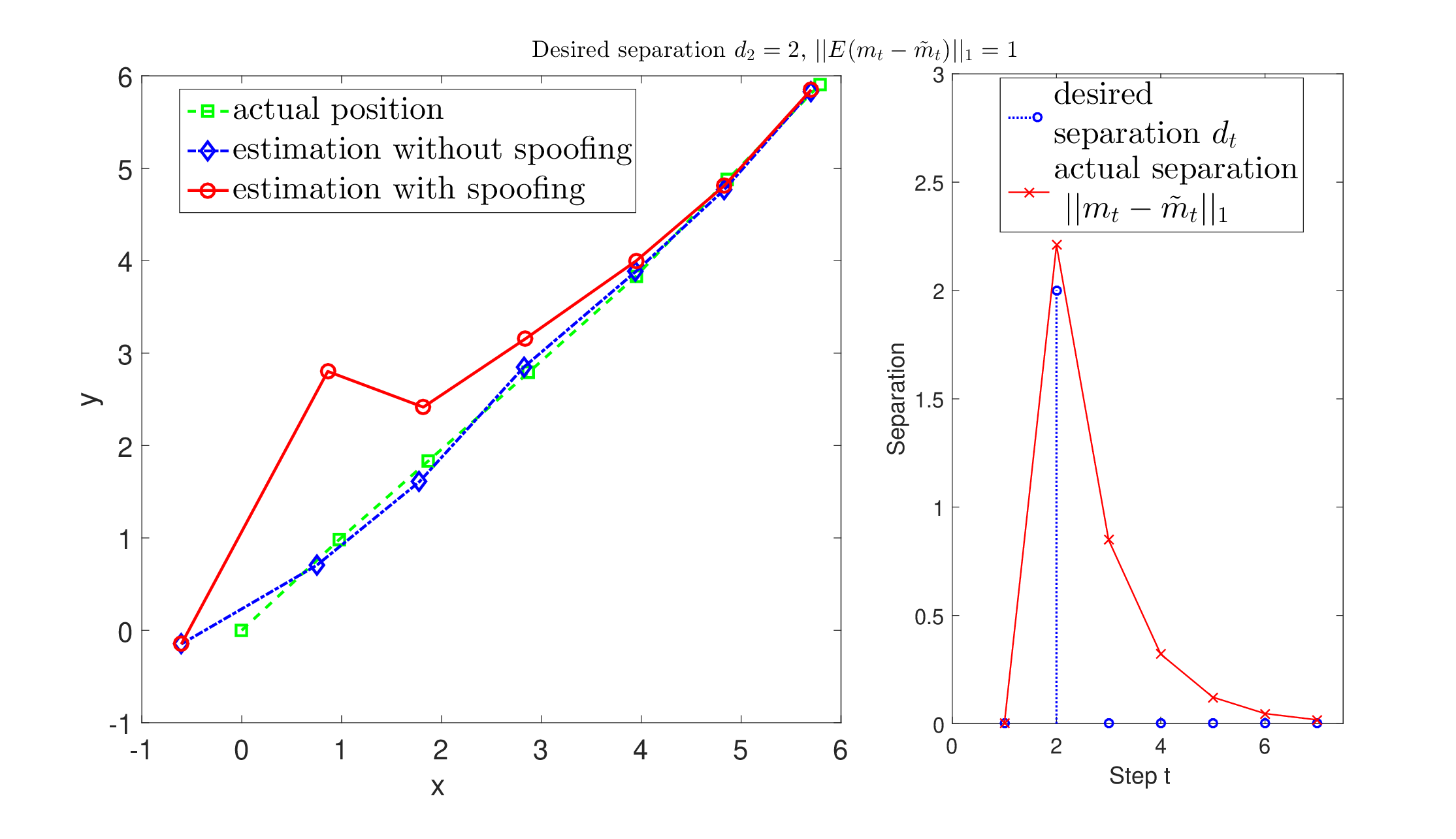}}
\subfigure[Results with $d_1$ = $\{1,2,3,4,5\}$ for 100 trials.]{\includegraphics[width=0.65\columnwidth]{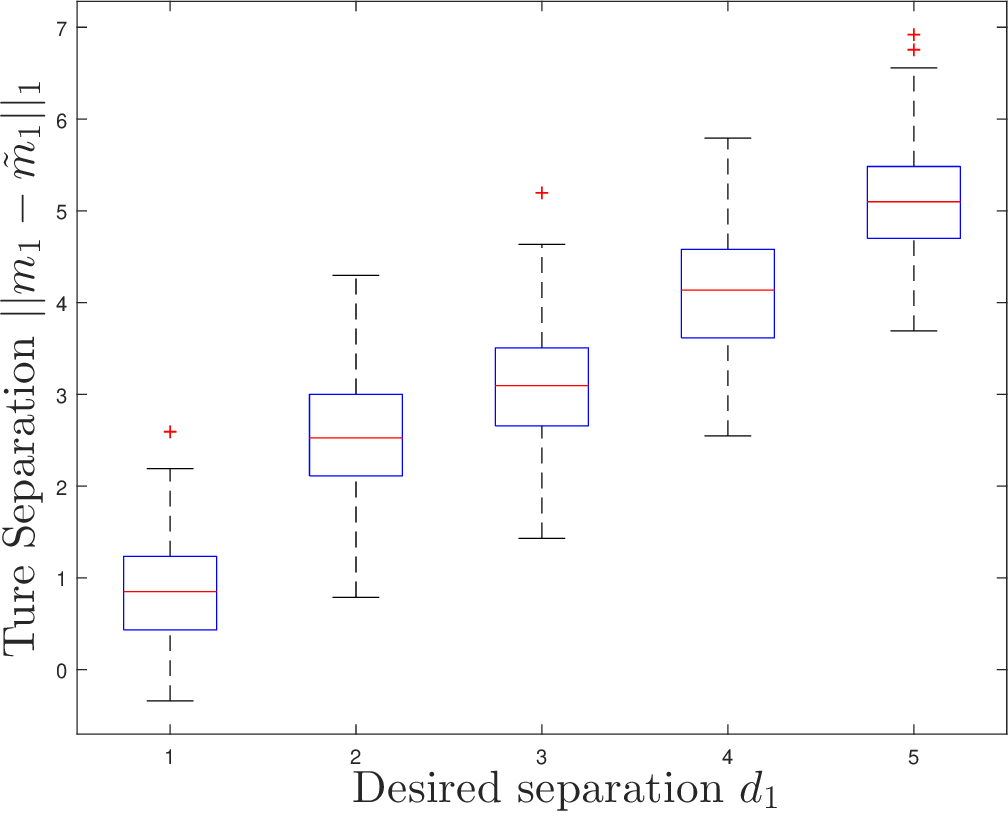}}
\caption{Offline signal spoofing with unknown ($m_0, \Sigma_0$).}
\label{Simulation_problem2}             
\end{figure}

For online case, spoofing signals are continuously generated by using receding horizon optimization with new noisy measurements. We set the receding horizon as $H=15$. Even though offline strategy performs comparatively as online strategy (Figure~\ref{Simulation_online}), online spoofing strategy achieves almost the same separation as the desired, while offline strategy has certain divergence. This is because online strategy can update the measurement at each step.

\begin{figure} 
\centering
{\includegraphics[width=0.65\columnwidth]{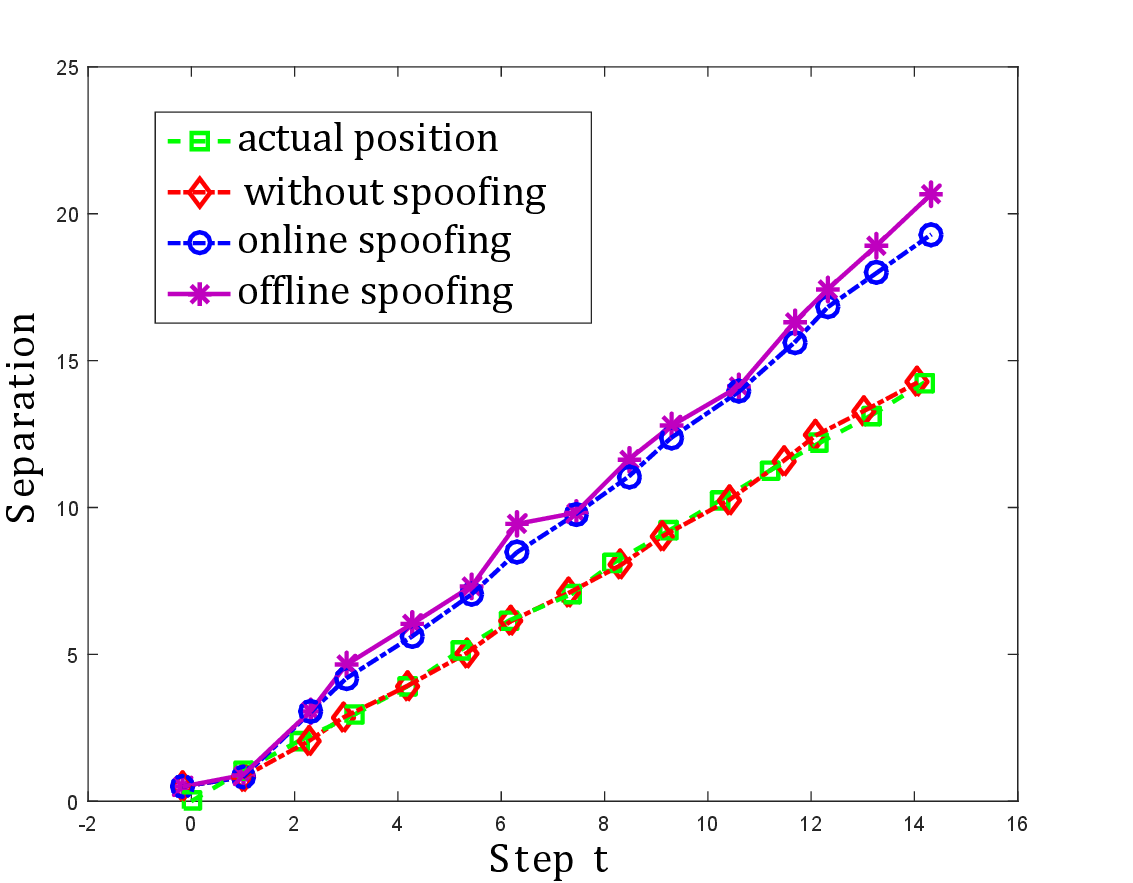}}
\caption{Online spoofing and offline spoofing with unknown ($m_0, \Sigma_0$).}
\label{Simulation_online}           
\end{figure}

\section{Signal spoofing with failure detector} \label{sec:detector}
\Revised{In this section, we evaluate the performance of the false data injection strategy in the presence of a failure detector. We show the conditions (Theorem~\ref{the:detector}) under which the generated false data can mislead a $\chi^2$ detector. This result can be also extended to other residual-based detectors.} 
\Revised{
\subsection{$\chi^2$ failure detector}
A $\chi^2$ detector computes the following measure,
\begin{equation}
g_t = r_t^T \Sigma_{r_t}^{-1} r_t,
\label{eqn:chi2}
\end{equation}
where $r_t = z_t - \mathcal{H}m_t$ is the innovation or measurement residual of the KF. Here, $\Sigma_{r_t}$ is the covariance matrix of the residual~\cite{thrun2005probabilistic}. The residual is Gaussian since it is the linear combination of two Gaussian random variables. It is known that $g_t$ is $\chi^2$ distributed with $n$ degrees of freedom. If $g_t>\text{threshold}$, the detector raises an alarm that the filter is under attack~\cite{brumback1987chi}.}

\Revised{
First, we review the Kalman Filter update equations, 
\begin{align}
&{m}_{t|t-1} =  \mathcal{F}m_{t-1|t-1} +  \mathcal{G}u_t,\label{eqn:kalman1} \\
&{m}_{t|t}= \mathcal{F}{m}_{t|t-1}+K_t(z_t- \mathcal{H}{m}_{t|t-1}), \label{eqn:kalman2}
\end{align}
where $K_t$ is the Kalman gain and is given by,
\begin{equation}
\label{eqn:kalman3}
K_t = (\mathcal{F} {\Sigma}_{t|t-1} \mathcal{F}' + R_t)\mathcal{H}'(\mathcal{H}\Sigma_{t|t-1} \mathcal{H}' + Q_t )^{-1}.
\end{equation}
We use the notation, $\tilde{\cdot}$, to indicate the system under attack.
}

\Revised{
Intuitively, lower the amount of injected attack signal, the less likely it will be detected. This is the motivation behind reducing the energy of the injected system. Nevertheless, when designing an attack sequence over a time horizon, we may have to carefully design the separation sequence $d_1,d_2,\cdots,$ so that they are not too large. In the following, we modify notion of a successful attack from~\cite{mo2010control} and show how to use that to general a successful attack sequence. The differences between two systems are defined as,
\begin{equation}
\begin{split}
     \Delta m_t \triangleq \tilde{m}_t - m_t, \quad \Delta z_t \triangleq \tilde{z}_t - z_t,\quad \Delta r_t \triangleq \tilde{r}_t - r_t.
 \label{eqn:difference}
\end{split}
\end{equation}
\begin{definition}
Given $\delta>0$, the $\chi^2$ detector is successfully attacked if there exists an attack sequence $\epsilon_1,\epsilon_2,\cdots,\epsilon_T$ such that the following holds:
$$||\Delta r_t|| < \delta,\quad \forall t,$$
where $\Delta r_t$ is defined above.
\end{definition}}
\Revised{
\begin{remark}
If $\Delta r_t$ is bounded, then the difference of its quadratic form   $\tilde{g}_t - g_t$ is also bounded. Also, as pointed out by \cite{mo2010control}, by linearity, we can find a $\delta'>0$, such that $|P(\tilde{g}_t>threshold) -P({g}_t>threshold)| \le \delta', \quad \forall t$.
\end{remark}
This definition of successful attack follows the $(\epsilon,\alpha)$--attack definition by Mo et al.~\cite{mo2010control}. When the probability of the alarm $P(\tilde{g}_t>threshold) -P({g}_t>threshold)$ is bounded and  a small enough $\delta$, the alarm rate $\delta'$ will converge to the false alarm rate of the healthy system. Mo et al.~\cite{mo2015performance} presented the relationship between $\delta$ and $\delta'$. 
}

\Revised{
Given the threshold for the $\chi^2$ detector and $\delta$, the question is how to set desired separations $d_1,d_2,\cdots$  such that we can avoid being detected. In the following, we give a sufficient condition for designing $d_1,d_2,\cdots$.
\begin{theorem}
If the separations $ \Delta m_{t+1}$ and $\Delta m_{t}$ satisfy $|| K_{t+1}^{-1}|| \cdot \left\lVert \Delta m_{t+1} - \mathcal{F}\Delta {m}_t \right\rVert \le \delta$, then the proposed algorithm can successfully attack the $\chi^2$ detector at step $t+1$.
\label{the:detector}
\end{theorem}
}
\begin{proof}
\Revised{
Manipulating Equations (\ref{eqn:kalman1}), (\ref{eqn:kalman2}), (\ref{eqn:kalman3}), and (\ref{eqn:difference}), we can prove that,
\begin{equation}
    \Delta m_{t+1} =   K_{t+1} \Delta r_{t+1}  + \mathcal{F}\Delta {m}_t.
    \label{eqn:DeltaZt}
\end{equation}
Taking the norm of Equation~(\ref{eqn:DeltaZt}), we have
\begin{equation}
\begin{split}
      || \Delta r_{t+1}||  &= \left\lVert -K_{t+1}^{-1}\Delta m_{t+1}  + K_{t+1}^{-1} \mathcal{F}\Delta {m}_t \right\rVert\\
      &\le || K_{t+1}^{-1}|| \cdot \left\lVert \Delta m_{t+1} - \mathcal{F}\Delta {m}_t \right\rVert.\\
\end{split}
\end{equation}
Therefore,
\begin{equation}
    ||\Delta r_{t+1}|| \le || K_{t+1}^{-1}|| \cdot \left\lVert \Delta m_{t+1} - \mathcal{F}\Delta {m}_t\right\rVert.
\end{equation}
We apply the condition of successful attack. If we have,
\begin{equation}
\begin{split}
        || K_{t+1}^{-1}|| \cdot \left\lVert \Delta m_{t+1} - \mathcal{F}\Delta {m}_t \right\rVert \le \delta
\end{split}
\label{eqn:successful_attack}
\end{equation}
then, 
$$||\Delta r_k|| < \delta.$$
Note that $\mathcal{F}$ is a known matrix, and the Kalman gain $K_t$ can be computed from the initial covariance matrix $\Sigma_0$. 
Hence, we can design the attack sequence $\epsilon_1, \epsilon_2,\cdots$ for a $\chi^2$ detector given the threshold $\delta$.}
\end{proof}

\Revised{
Theorem~\ref{the:detector} shows that if we want to attack a system with $\chi^2$ detector, the strategy is to make the difference between two consecutive desired separations, $d_{t+1}$ and $||\mathcal{F}||d_t$, as small as possible. In general, when we design the attack sequence, we want to increase the separation to mislead the system. Without loss of generality, we can consider the case that all the elements in  $m_{t} - \mathcal{F}\Delta {m}_{t-1}$ are non-negative. Given a known separation from previous step $t-1$, we have the following condition for $d_t$ when we design the desired separation:
 \begin{equation}
     d_{t} -|| \mathcal{F}\Delta m_{t-1}|| \le \delta, \quad t>1.
     \label{eqn:detector_condition}
 \end{equation}
\begin{remark}
Applying Theorem~\ref{the:detector} and Equation~(\ref{eqn:detector_condition}) and given $\delta$, we can design a sequence of separations $d_1,d_2,\cdots,d_T$ a priori since $d_t = ||\Delta m_t||$. For example, if we know the Kalman filter's initial condition, assuming $d_1=0$, we have,
\begin{equation}
        d_2 = \delta ||K_1||,
    \label{eqn:Computedt1}
\end{equation}
With a known $\Delta m_2$ from the proposed LP algorithm, $d_3$ can be designed with the following equation,
\begin{equation}
\begin{split}
            d_{3} = (|| \mathcal{F}&\Delta m_{2}|| + \delta)\cdot ||K_1||.
    \label{eqn:Computedt2}
\end{split}
\end{equation}
Iteratively, we can get the desired separation for all times and guarantee a successful attack ($||\Delta r_t|| < \delta,\quad t=1,2,\cdots ,T.$)\footnote{Since the inequality is conservative, if the equation converges to $d_{t+1}=d_{t}$, we can add a small term $O(t)$, and let $d_{t} = (|| \mathcal{F}\Delta m_{t-1}|| + \delta)\cdot ||K_t||+O(t)$.}.
\end{remark}
}
 In the following section, we will provide an example that by increasing the separation with given condition. The simulation shows the $\chi^2$ will not alarm when the separation is designed as Theorem~\ref{the:detector}.
\subsection{Simulation with $\chi^2$ detector}
\Revised{We consider the $L_1$  vector norm and the same model from the simulation section. We use the following parameters:
$$ \mathcal{F}=I_{2\times2},  \mathcal{G}=I_{2\times2},  
  R =0.1I_{2\times2}, 
    Q =0.1I_{2\times2}, \delta =0.1.\\$$
Given $\delta =0.1$, we can design the separation $d_t$. The attack result and the $\chi^2$ detector value are shown in Figure~\ref{fig:Successful_attack_chisquare}.}
\begin{figure}[H]
\centering
\includegraphics[width=1\columnwidth]{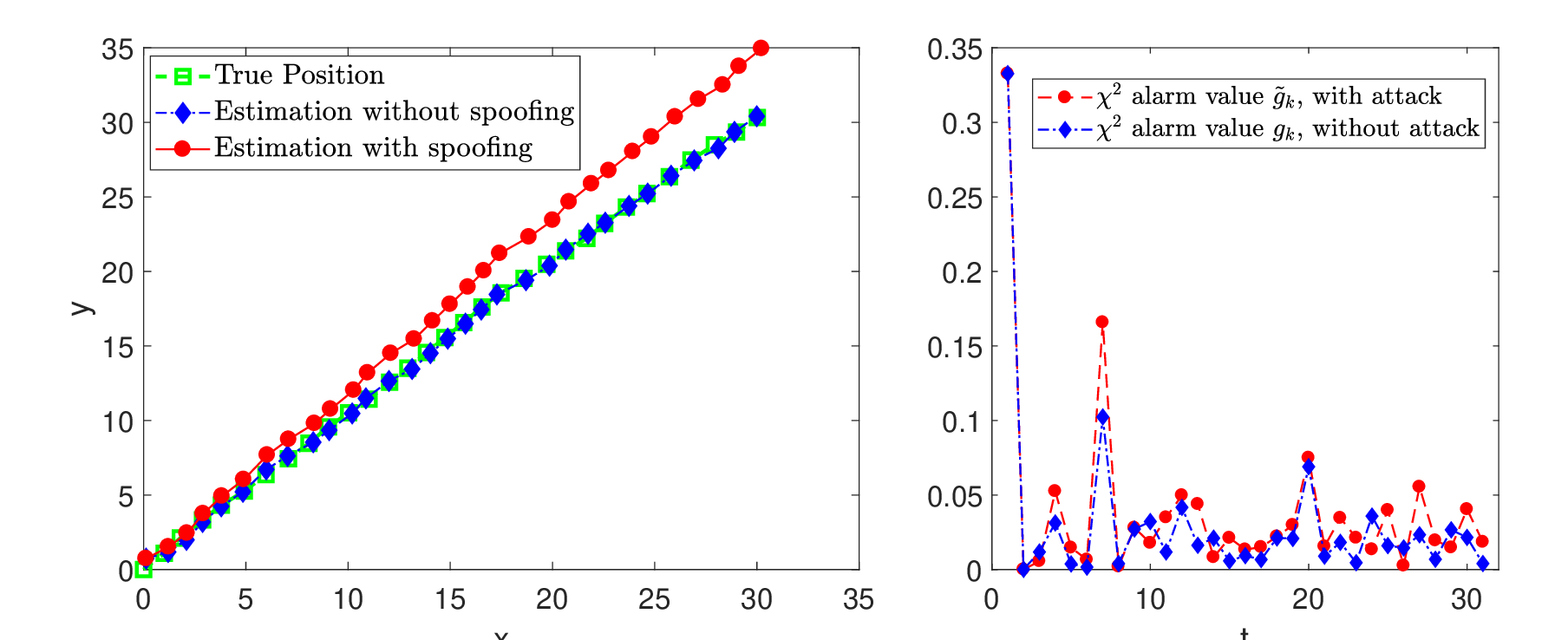}
\caption{Estimated positions and the $\chi^2$ detector's output ($g_k$), when the spoofing signals are injected by setting $d_t$ based on Theorem~\ref{the:detector}.}
\label{fig:Successful_attack_chisquare}         
\end{figure}

\Revised{
\begin{figure}[H]
\centering
\includegraphics[width=0.7\columnwidth]{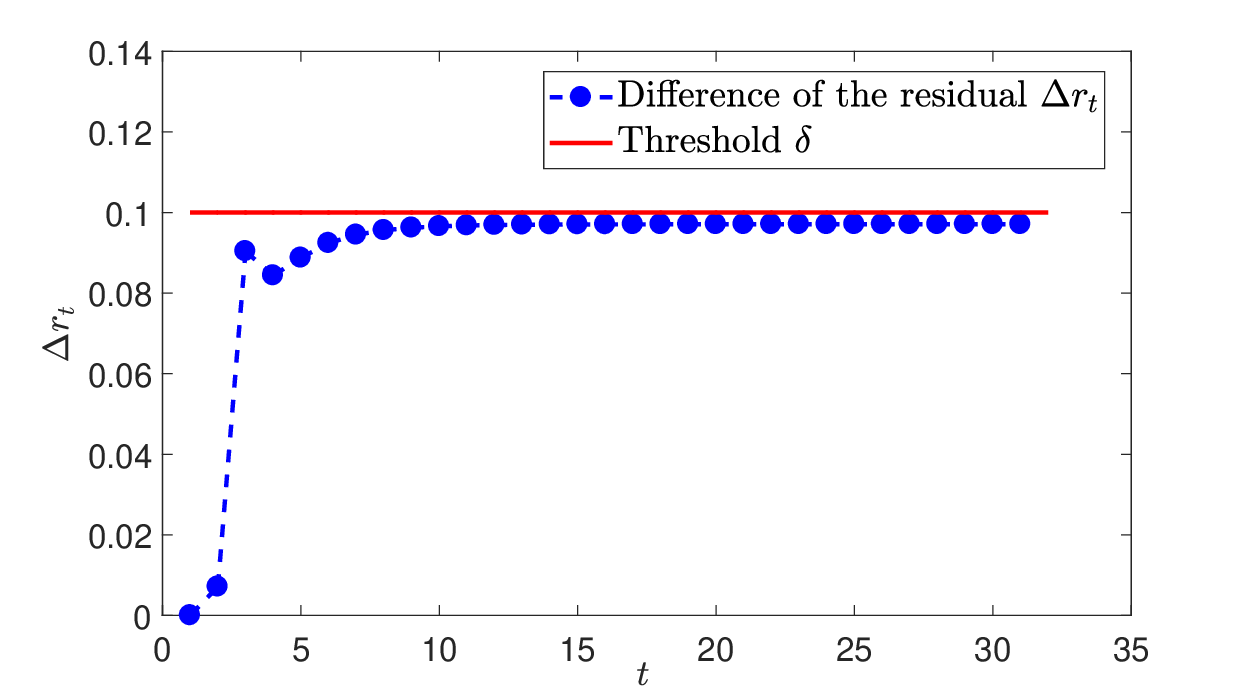}
\caption{ The  differences of  the two residual $||r_t - \tilde{r}_t||$. The threshold $\delta = 0.1$.}
\label{fig:redisual_threshold}         
\end{figure}
The differences ($\Delta r_t$) between the residual are shown in Figure~\ref{fig:redisual_threshold}.
We ran 1000 trials using this strategy as shown in Figure~\ref{fig:Successful_attack_chisquare}. The $\chi^2$ detector detected the attack in 112 trials (The false alarm rate is equal to $11.054\%$, this rate indicates the $\chi^2$ detector can not tell whether the alarm is a false alarm or not). This is close to the actual false alarm rate without any attack. In this scenario, the system will not be able to distinguish between false and true alarms. Thus, the attack strategy is able to successfully mislead the $\chi^2$ detector. }

\section{Conclusion} \label{sec:conc}

We study the problem of injecting spoofing signals to achieve a desired separation in the output of a Kalman filter without and with attack. We study many variants of the problem. Our main approach was to formulate the problems as nonlinear, constrained optimization problems in order to minimize the energy of the spoofing signal. We show that under some technical assumptions, the problems can be solved by linear programming optimally. We present a more computationally expensive approach to solve the problem, without the aforementioned assumptions. \Revised{We also present a sufficient condition for this strategy to mislead the $\chi^2$ failure detector}. 

Our immediate future work is to study the game-theoretic aspects of the problem. In this work, we did not consider any active strategy being employed by the observer to mitigate the attack. In future works, we will consider the case of designing spoofing signals that explicitly take the attack mitigation strategies into account. In all the problems considered in this paper, the desired separations are taken as inputs provided by the user. The simulation results suggest that carefully choosing a specific profile of the desired separation can make it harder to detect by the observer. A possible extension is to automatically generate the optimal profile that not only minimizes the signal energy but also ensures that it is not detected by the observer. Another future work is to extend the strategy to more general non-linear state estimation approaches, such as the extended Kalman filter, unscented Kalman filter, and particle filters. 
\section*{APPENDIX}

\subsection{Proof of Theorem \ref{mk}}
\label{sec:proof_difference_mt}
Before we prove Theorem~\ref{mk}, we can review the Kalman Filter update equations from equation (\ref{eqn:difference}), (\ref{eqn:kalman1}), (\ref{eqn:kalman2}), (\ref{eqn:kalman3}).

According to the Kalman gain update Equation~(\ref{eqn:kalman3}), the evolution covariance matrix at step $t$, $\Sigma_t$, only depends on the state model parameters and the initial condition of the covariance matrix $\Sigma_0$. The Kalman gain at step $t$, $K_t$ depends on the covariance matrix $\Sigma_t$. Both $\Sigma_t$ and $K_t$ do not depend on the control input series $\{u_t\}_{t=1,\cdots,k}$, measurement $\{z_t\}_{t=1,\cdots,k}$. Thus, the covariance matrix and the Kalman gain can be predicted from the KF covariance update steps.
\begin{equation} \label{Riccati}
\begin{split}
&{\Sigma}_{t+1|t}=\mathcal{F}{\Sigma}_{t|t}\mathcal{F}'+R_t,\\
&{\Sigma}_{t+1|t+1}= (I- K_t \mathcal{H}) {\Sigma}_{t+1|t}.
 \end{split}
\end{equation}

From Equation \eqref{Riccati}, the Kalman gain can be predicted from the initial condition $\Sigma_0$. 

We now prove our main result.

\begin{proof} 
From the update of  KF, we have
\begin{equation} \label{kalman_covariance}
\begin{split}
 m_t& = m_{t|t-1} + K_t(z_t-\mathcal{H}m_{t|t-1})\\
&=(I-K_t\mathcal{H})m_{t|t-1} +K_t z_t\\
&=(I- K_t\mathcal{H}) (\mathcal{F}m_{t-1}
+\mathcal{G}u_{t-1}) +K_t z_t.
\\
\end{split}
\end{equation}
and
$$\tilde{m}_t = (I- \tilde{K}_t\mathcal{H}) (\mathcal{F}\tilde{m}_{t-1} +\mathcal{G}u_{t-1}) +\tilde{K}_t (z_t + \epsilon_t).$$
Recursively,
\begin{equation}
\begin{split}
\label{Proposition3}
m_t & - \tilde{m}_t \\
=& (I- K_t\mathcal{H}) (\mathcal{F}m_{t-1} +\mathcal{G}u_{t-1}) +K_t z_t \\
&- [(I- \tilde{K}_t\mathcal{H}) (\mathcal{F}\tilde{m}_{t-1} +\mathcal{G}u_{t-1}) +\tilde{K}_t (z_t+\epsilon_t)]\\
=& (\mathcal{F} - K_t \mathcal{H} \mathcal{F})m_{t-1} - (\mathcal{F}- \tilde{K}_t \mathcal{H} \mathcal{F})\tilde{m}_{t-1}\\
&-(K_t - \tilde{K}_t)\mathcal{H} \mathcal{G} u_{t-1} + [K_t z_t - \tilde{K}_t(z_t+ \epsilon_t)],
\end{split}
\end{equation}
subtract a term  $\tilde{K}_t \mathcal{H}\mathcal{F}m_{t-1}$ then add the same term, 
\begin{equation}
\begin{split}
\label{Proposition2}
m_t & - \tilde{m}_t \\
=& (\mathcal{F} - \tilde{K}_t\mathcal{H}\mathcal{F})m_{t-1} - (\mathcal{F}- \tilde{K}_t\mathcal{H}\mathcal{F})\tilde{m}_{t-1}\\
&- (K_t - \tilde{K}_t) \mathcal{H} \mathcal{G} u_{t-1} +(K_t - \tilde{K}_t)z_t - \tilde{K}_t \epsilon_t\\
&-(K_t - \tilde{K}_t) \mathcal{H}\mathcal{F}m_{t-1}\\
=&(\mathcal{F} - \tilde{K}_t \mathcal{H}\mathcal{F})(m_{t-1} - \tilde{m}_{t-1})\\
&+(K_t - \tilde{K}_t)[z_t - \mathcal{H} (\mathcal{F}m_{t-1} + \mathcal{G}u_{t-1})] -\tilde{K}_t \epsilon_t.
\end{split}
\end{equation}

Define, $A_t = \mathcal{F} - \tilde{K}_t \mathcal{H}\mathcal{F}$, $B_t = (K_t-\tilde{K}_t)[z_t-\mathcal{H}(\mathcal{F}m_{t-1}+\mathcal{G}u_{t-1})]$ and $C_t = - \tilde{K}_t \epsilon_t.$
Then,
\begin{equation*}
\begin{split}
\label{Proposition2}
m_t  -& \tilde{m}_t \\
=& A_t (m_{t-1} - \tilde{m}_{t-1}) + B_t + C_t\\
=& A_t[A_{t-1}(m_{t-2} - \tilde{m}_{t-2}) + B_{t-1} + C_{t-1}] + B_t + C_t\\
\qquad \dots \\
=&\prod_{i=0}^{t-1}A_{t-i}\cdot (m_0  - \tilde{m}_0)+ (B_t + C_t)\\
&+ A_t (B_{t-1}+C_{t-1}) +  \cdots
+ A_t \cdots A_3 A_2 (B_1 + C_1)  \\
= &\prod_{i=0}^{t-1}A_{t-i}\cdot (m_0  - \tilde{m}_0) +B_t + C_t\\
&+\sum_{i=0}^{t-2} \left( \prod_{j=0}^{i}A_{t-j} (B_{t-1-i} + C_{t-1-i}) \right) .
\end{split}
\end{equation*}
\end{proof}

\bibliographystyle{IEEEtran}
\bibliography{IEEEabrv,main}

\end{document}